\theoremstyle{plain}
\newcommand*\pFq[6][8]{%
	\begingroup 
	\pFqmuskip=#1mu\relax
	\mathcode`\,=\string"8000
	\begingroup\lccode`\~=`\,
	\lowercase{\endgroup\let~}\pFqcomma
	{}_{#2}F_{#3}{\left[\genfrac..{0pt}{}{#4}{#5};#6\right]}%
	\endgroup
}
\newcommand{\pFqcomma}{\mskip\pFqmuskip}
\newtheorem{theorem}{Theorem}[section]
\newtheorem{lemma}[theorem]{Lemma}
\newtheorem{proposition}[theorem]{Proposition}
\newtheorem{corollary}[theorem]{Corollary}
\newtheorem{definition}[theorem]{Definition}
\newtheorem{example}[theorem]{Example}
\newtheorem{remark}[theorem]{Remark}
\newtheorem{conjecture}[theorem]{Conjecture}
\newtheorem{problem}{Problem}
\newtheorem{question}[theorem]{Question}
\newcommand \bth[1] { \begin{theorem}\label{t#1} }
	\newcommand \ble[1] { \begin{lemma}\label{l#1} }
		\newcommand \bpr[1] { \begin{proposition}\label{p#1} }
			\newcommand \bco[1] { \begin{corollary}\label{c#1} }
		\newcommand \bde[1] { \begin{definition}\label{d#1}\rm }
					\newcommand \bex[1] { \begin{example}\label{e#1}\rm }
						\newcommand \bre[1] { \begin{remark}\label{r#1}\rm }
							\newcommand \bcon[1] { \begin{conjecture}\label{con#1}\rm }
								\newcommand \bque[1] { \begin{question}\label{que#1}\rm }
			\newcommand{\beq}{\begin{equation}}
			\newcommand{\eeq}{\end{equation}}
			\newcommand{\beqa}{\begin{eqnarray}}
			\newcommand{\eeqa}{\end{eqnarray}}
			\newcommand{\beaa}{\begin{eqnarray*}}
				\newcommand{\eaa}{\end{eqnarray*}}
				\newcommand {\ethe} { \end{theorem} }
								\newcommand {\ele} { \end{lemma} }
		\newcommand {\epr} { \end{proposition} }
						\newcommand {\eco} { \end{corollary} }
					\newcommand {\ede} { \end{definition} }
				\newcommand {\eex} { \end{example} }
			\newcommand {\ere} { \end{remark} }
		\newcommand {\econ} { \end{conjecture} }
	\newcommand {\eque} { \end{question} }
\newtheorem{THEO}{Theorem}
\newtheorem{PROB}{Problem}
\def \Cset {{\mathbb C}}
\def \ad { {\mathrm{ad}} }
\newcommand{\al}{\alpha}
\newcommand{\be}{\beta}
\newcommand{\bC}{\mathbb C}
\newcommand{\bR}{\mathbb R}
\def \Cset {{\mathbb C}}
\def \de {\delta}
\def \al {\alpha}
\def \be {\beta}
\def \la {\lambda}
\def \La {\Lambda}
\def \ga {\gamma}
\def \de {\delta}
\begin{document}
\title[In search of a higher Bochner theorem]
 {In search of a higher Bochner theorem}
 		\author[E.~Horozov]{Emil Horozov}
 	\address{
 		Department of Mathematics and Informatics \\
 		Sofia University \\
 		5 J. Bourchier Blvd. \\
 		Sofia 1126 \\
 		Bulgaria, and\\	
 		Institute of Mathematics and Informatics, \\
 		Bulg. Acad. of Sci., Acad. G. Bonchev Str., Block 8, 1113 Sofia,
 		Bulgaria	}
 	\email{horozov@fmi.uni-sofia.bg}

\author[B.~Shapiro]{Boris Shapiro}

\address{
Department of Mathematics \\
Stockholm University, S-10691, Stockholm, Sweden
}
\email{shapiro@math.su.se}

\author[M.~Tater]{Milo\v{s} Tater}
\address{Department of Theoretical Physics, Nuclear Physics Institute,
Academy of Sciences, 250\,68 \v{R}e\v{z} near Prague, Czech
Republic}
\email{tater@ujf.cas.cz}



\dedicatory{To Salomon Bochner, a mathematical hero}

\date{\today}
\keywords{The Bochner-Krall problem, finite recurrence relations, Darboux transform, multi-orthogonal  polynomials}
\subjclass[2010]{34L20 (Primary); 30C15, 33E05 (Secondary)}

\date{}

\begin{abstract} We initiate the study of  a natural generalisation of the classical Bochner-Krall problem asking  which  linear ordinary differential operators possess sequences of eigenpolynomials satisfying  linear recurrence relations of finite length; the classical case corresponds to the $3$-term recurrence relations with real coefficients subject to some extra restrictions.  We formulate a general conjecture and prove it in the first non-trivial case of operators of order $3$.
\end{abstract}

\maketitle


\medskip

\section{Introduction}\label{intro}

\subsection{On the classical problem}
In 1929 S.~Bochner published a short paper \cite{Bo} dealing with orthogonal polynomials and Sturm-Liouville problem.\footnote {Salomon Bochner made substantial contributions to harmonic analysis, probability theory, differential geometry  as well as history of mathematics. Several notions and results such as  the Bochner integral, Bochner theorem on Fourier transforms, Bochner-Riesz means, Bochner-Martinelli formula bear at present his name.  His mathematical production consists of 6 books on various subjects including history of mathematics and about 140 research papers. Among these papers 32 were published in Proc. Nat. Acad. USA, 46 in Annals of Mathematics, 3 in Acta Math., and 4 in Duke Math. J.  
He belonged to a  sizeable group of European mathematicians of Jewish origin who moved to US before or during the WWII  and   contributed to an enormous development of  mathematics in their new motherland.}
Although  after writing \cite {Bo}, he left this area  for good, his importance for the theory of orthogonal polynomials is difficult to overestimate;  at the moment \cite {Bo} has been cited 453 times.\footnote{As was pointed out by an anonymous referee the main result of \cite{Bo} is essentially contained in an earlier paper \cite{Ro} with a very fuzzy title.} 



\smallskip
Namely, the following classification problem  was formulated by S.~Bochner for differential operators of order $2$, and by H.~L.~Krall for differential operators of general order.

\begin{PROB}[\cite{Bo, Kr_{21}}]\label{BK-problem}
Describe all linear differential operators  with  polynomial coefficients of the form:
\begin{equation}\label{opBK}
L=L(x,\partial)=\sum_{i=1}^k a_i(x)\partial^i,
\end{equation}
such that a)  $\deg a_i(x) \le i$; b)  there exists a positive integer  $i_0\le k$ with $\deg a_{i_0}(x)=i_0$,
satisfying the condition that the set of polynomial solutions of the formal spectral problem
\begin{equation}\label{eq:specprob}
L f(x)=\lambda f(x),\quad  \lambda\in\bR,
\end{equation}
 is a sequence of polynomials orthogonal with respect to some real bilinear form.  (Here $\partial:=\frac{d}{dx}$.)
\end{PROB}



Following the terminology used in physics, we call linear differential operators given by \eqref{opBK}  {\it exactly solvable}, see e.g. \cite{Tu, RuTu}.  Observe that every exactly solvable operator has a unique eigenpolynomial of any sufficiently large degree which makes Problem~\ref{BK-problem} well-posed.

Let us denote by $\{P_n^L(x)\}$  the sequence of monic eigenpolynomials of an exactly solvable operator $L$.  (We assume that $\deg P_n^L(x)=n$ where $n$ runs from some positive integer to $+\infty$.) An exactly solvable operator which solves Problem~\ref{BK-problem}   will  be  called  a   {\it Bochner-Krall operator}. 

\medskip
  S.~Bochner stated and solved Problem~\ref{BK-problem} for the second order differential operators, see  \cite{Bo} and Theorem~\ref{th:Boch} below. The second  order classification contains four families, corresponding to the classical Hermite, Laguerre, Jacobi, and Bessel polynomials. Eleven years after that H.~L.~Krall settled the order four case, see \cite{Kr2}. The order four classification contains seven families: the four classical families corresponding to the case when an order four operator is a function of  an operator of order two, and three new families which are polynomial eigenfunctions of differential operators that do not reduce to operators of order two, see \cite{Kr_{21}} and
 Theorem~\ref{th:Krall} below.

 An assortment of families corresponding to order six operators has been found in the ensuing decades, see e.g. \cite[Chapter XVI]{Krall-junior}.    W.~Hahn showed that the four classical families are the only orthogonal polynomial sequences $\{P_n(x)\}_{n=0}^\infty$ for which $\{\tfrac{d}{dz}P_n(x)\}_{n=0}^\infty$ is also an orthogonal polynomial sequence, see \cite{Hahn}. An analog of this was proved for the order 4 case by K.~H.~Kwon, L.~L.~Littlejohn, J.~K.~Lee, and B.~H.~Yoo \cite{KLLY}.  The most general form of Problem \ref{BK-problem}
is still open for operators of order six or higher, but K.~H.~Kwon and J.~K.~Lee have found a  satisfactory solution if the polynomials are required to be orthogonal with respect to a compactly supported positive measure, see \cite{KL}. (A weaker result in the same direction was somewhat earlier obtained  in \cite{BRSh}.)

\medskip
In Problem \ref{BK-problem}, one may equivalently seek a sequence of moments $\{\mu_j\}_{j=0}^\infty$ which permits construction of the orthogonal sequence of polynomials, see \cite[p. 223]{Krall-junior}.  Let $\langle \cdot, \cdot \rangle$ be a candidate bilinear form, and define a weight on the set of polynomials with respect to the inner product by requiring that $\langle w(x), x^n\rangle = \mu_n$. (Note that such a weight $w(x)$ is not necessarily positive or unique.)  One can show that solutions to Problem \ref{BK-problem} exist only if the product $wL$ is equal to its formal adjoint when acting on polynomials, and this immediately implies that the order of $L$ must be even, see  \cite{Krall-junior}, p. 228.  

\medskip
Recall that  by a trivial generalisation of Favard's theorem, every  sequence of monic orthogonal polynomials satisfies a $3$-term recurrence relation of the form:
\begin{equation}\label{3-term}
xP_n(x) = P_{n+1}(x) +u(n) P_n(x) + v(n) P_{n-1}(x),
\end{equation}
where $u(n)$ and $v(n)$ are real numbers.
 Observe that the weight function $w(x)$ is non-negative in $\bR$ if and only if $v(n)>0$ for all $n >0$.

\subsection{Algebraisation and generalisation}

\medskip
The following natural algebraic version of the classical Bochner-Krall problem  was studied in substantial details in \cite{GHH, GY}.  

\begin{problem}[algebraic Bochner-Krall problem] \label{CBK-problem}
Describe all linear differential operators  \eqref{opBK}
such that their sequence of monic eigenpolynomials satisfies \eqref{3-term} with  some complex-valued $u(n)$ and $v(n)$.
\end{problem}

 \medskip
Observe that contrary to the case of the classical Bochner-Krall problem, Problem~\ref{CBK-problem} is  purely algebraic and (hopefully) has an easier solution compared to the classical one, see Conjecture~\ref{con1} below. On the other hand, the connection between the algebraic Bochner-Krall problem and the classical one is rather straight-forward.

\begin{proposition} \label{prop:translate} A differential operator solves the classical Bochner-Krall problem with a positive weight function $w(x)$ if and only if it solves Problem~\ref{CBK-problem} and all its eigenpolynomials are real-rooted and the roots of any two consecutive eigenpolynomials are strictly interlacing.
\end{proposition}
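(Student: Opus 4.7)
The plan is to establish the two implications separately.

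For the direction $(\Rightarrow)$, assume $\{P_n^L\}$ is orthogonal with respect to a positive weight $w(x)$ on $\bR$. The classical Favard theorem produces a three-term recurrence of the form \eqref{3-term} with real coefficients $u(n),v(n)$ and $v(n)>0$; in particular $L$ solves Problem~\ref{CBK-problem}. Real-rootedness of each $P_n^L$ follows from the standard Gauss-type argument: if $P_n^L$ had a non-real (hence conjugate-pair) root, or a real root of multiplicity $\ge 2$, one could extract a nonnegative real factor $R(x)\ge 0$ of positive degree and write $P_n^L=R\cdot Q$ with $\deg Q<n$. Orthogonality would then give
$0=\int w\, P_n^L\, Q\, dx=\int w\, Q^2\, R\, dx>0,$
a contradiction.

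For the direction $(\Leftarrow)$, assume $\{P_n^L\}$ satisfies \eqref{3-term} with a priori complex $u(n),v(n)$ and that every $P_n^L$ is real-rooted. My plan is in three steps. Step (a): since each monic $P_n^L$ has only real roots, $P_n^L\in\bR[x]$; matching the coefficients of $x^n$ and $x^{n-1}$ on both sides of \eqref{3-term} then forces $u(n),v(n)\in\bR$. Step (b): show that $v(n)>0$ for every $n\ge 1$. Step (c): once $v(n)>0$ is established, the converse Favard theorem supplies a positive-definite linear functional $\mathcal{L}$ on $\bR[x]$ with respect to which $\{P_n^L\}$ is orthogonal, and standard Hamburger moment theory then produces a positive measure $w(x)\,dx$ representing $\mathcal{L}$.

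The crux is Step (b), which I would attack by induction. Assuming $v(k)>0$ for $k<n$, classical theory gives that $P_0,\dots,P_n$ is a bona fide orthogonal chain with strictly interlacing simple real roots $r_1<\dots<r_n$. Evaluating the recurrence at these roots yields $P_{n+1}(r_i)=-v(n)\,P_{n-1}(r_i)$, and interlacing forces $P_{n-1}(r_i)$ to alternate in sign with $i$. A careful sign-and-degree count, combined with the leading behavior of $P_{n+1}$ at $\pm\infty$, is intended to rule out $v(n)<0$; the degenerate case $v(n)=0$ collapses the recurrence to $P_{n+1}=(x-u(n))P_n$, producing a common factor in all subsequent eigenpolynomials, which is incompatible with $L$ being exactly solvable with pairwise distinct eigenvalues. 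The main obstacle I anticipate is precisely that a generic real three-term recurrence can admit a real-rooted $P_{n+1}$ with $v(n)<0$, so the argument cannot proceed from the recurrence alone: it must exploit the extra structure of $\{P_n^L\}$ as eigenpolynomials of the exactly solvable operator $L$, probably via the explicit algebraic description of the admissible coefficient sequences $(u(n),v(n))$ developed in \cite{GHH, GY}.
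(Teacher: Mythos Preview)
Your overall architecture matches the paper's proof exactly: $(\Rightarrow)$ via Favard plus the standard positivity argument for zeros of orthogonal polynomials, and $(\Leftarrow)$ via (a) real-rootedness forces real coefficients, (b) real-rootedness forces $v(n)>0$, (c) converse Favard. The paper's argument is only a few lines and simply \emph{asserts} each of these steps, in particular the sentence ``the real-rootedness of all $P_n$ implies that $v(n)>0$ for all $n$'' is stated without any justification.

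You are therefore more careful than the paper, and you have correctly located the one genuinely nontrivial point. Your worry in Step~(b) is well founded: for a generic real three-term recurrence, real-rootedness of every $P_n$ does \emph{not} by itself force $v(n)>0$ (one can build counterexamples already at $n=2$ by choosing $u(1)$ far from the root of $P_1$). So the paper's bare assertion is, strictly speaking, a gap, and your instinct that something beyond the recurrence must be used is correct. Your proposed route---pushing the interlacing induction as far as it goes and then invoking the extra rigidity coming from the eigenpolynomial structure (e.g.\ the explicit families catalogued in \cite{GHH,GY})---is a reasonable way to close it, though the paper itself offers no help here. In short: same approach as the paper, but you have put your finger on a lacuna the authors glossed over.
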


\begin{remark}{\rm Which additional conditions on an exactly solvable operator can guarantee the real-rootedness of its eigenpolynomials is unclear at the moment, but this could be related to the major results in the P\'olya-Schur theory, see e.g. \cite{BB}.
}
\end{remark}

\begin{remark}{\rm Observe that if we allow coefficients of the operator to depend on $n,$ then the whole problem trivializes, i.e., more or less any sequence of polynomials can be obtained in such a way.
}
\end{remark}

\medskip
 In what follows, we  also discuss the following natural generalisation of Problem~\ref{CBK-problem}.

\begin{problem}[generalised algebraic Bochner-Krall problem]\label{prob:genBK} Describe the set of linear differential operators  \eqref{opBK} such that their sequence of monic eigenpolynomials satisfies a finite recurrence relation of the form:
\begin{equation}\label{eq:MOP}
xP_n(x) =   P_{n+1}(x)  + \sum_{j=0}^{d} b_j(n)P_{n-j}(x)=\La P_n(x),\; n=0,1,\dots 
\end{equation}
	where $d$ is independent of $n$ and the coefficients $b_j(n)$ are independent of $x$. Here $\La$ is a difference operator acting on infinite column vectors $(\mu_0, \mu_1,\dots)^T$; interpreting $\La$ as an infinite matrix we get that its $n$-th row corresponds to   $$(\La)_n=T+\sum_{j=0}^db_j(n)T^{-j}$$ and $T(\mu_0, \mu_1,\dots) :=(\mu_1,\mu_2, \dots)$ is the shift operator.
\end{problem}

\medskip
\begin{remark} {\rm By a theorem of Maroni   \cite{Ma}, the latter condition   means that  the sequence $\{P_n(x)\}$ of polynomials is $d$-orthogonal, see e.g., \cite{VI}.  The study of $d$-orthogonal and multiple orthogonal polynomials has been a popular area of research during the last 3 decades, see e.g., \cite{ApKu}.
}

\end{remark}

\begin{remark}{\rm
 It is clear that if an operator $L$ solves Problems~\ref{CBK-problem} or ~\ref{prob:genBK}, then for   any univariate polynomial $s$ of positive degree,  the operator  $s(L)$ also solves the same problem with the same sequence of eigenpolynomials $\{P_n^L(x)\}$ and the sequence of eigenvalues coinciding with $\{s(\la_n)\}$. Therefore in what follows, we will always assume that $L$  is the  operator of the lowest order with a given sequence of eigenpolynomials. }
\end{remark}

\begin{remark}{\rm
Another way to produce new solutions to Problems~\ref{CBK-problem} and ~\ref{prob:genBK} from the existing ones is  to apply a bispectral Darboux transformation to the difference operator $\La$. This will produce a new difference operator $\widehat\La$ and a new differential operator $\widehat L$. However the order of $\widehat L$ could increase and, in fact, this  is what happens in all known examples.
We will call a differential operator $L$  \emph {irreducible} if it  cannot be obtained by applying a bispectral Darboux transformation to a differential operator of a  lower order. (For general information about Darboux transformation consult e.g. \cite{BrBr}). 
}
\end{remark}

Most of the above situations allow application of Darboux transformations on the side of the difference operator. More precisely, let us present the operator $\La$ as a matrix, acting on the column vector $\mathcal P(x)=(P_0(x), P_1(x), \dots)^T$ of polynomials in the variable $x$. Then in certain situations there exists a factorisation of $\La$ 
$$\La = D_1 \circ D_2,$$
where $D_1$ and $D_2$  are rational functions of $n$, and, additionally,  $D_1$ is an upper triangular matrix while $D_2$ is a lower triangular 
matrix. Set
$$\widehat \La = D_2 \circ D_1 $$
and define a new column vector $\widehat {\mathcal P}(x)=(\widehat P_0(x), \widehat P_1(x), \dots)^T$ of polynomials by using
$$ \widehat {\mathcal{ P}}(x) = D_2{\mathcal P}(x).$$
It is easy to see that $\deg \widehat P_n(x) = n$ and also that 
$$x \widehat {\mathcal {P} }( x ) = \widehat \La \widehat {\mathcal P}( x ).$$
In such a situation a general procedure described in \cite{BHY1} allows us to obtain a univariate differential operator $\widehat L$, such that
$$\widehat L\widehat P_n(x) = \nu_n\widehat P_n(x), \;n=0,1,\dots.$$

\medskip
To carry out  this procedure  in concrete cases is a nontrivial problem. However  in \cite{GHH,  GY} it has been successfully performed in the situations when one starts either from the Laguerre or from  the Jacobi polynomials. In particular, using a  $1$-step Darboux transformation one obtains Krall's polynomials. Analogously, many-step Darboux transformations lead to differential operators of higher order, but they do not increase the order of the difference operator.

\medskip
In this language,
 the above mentioned result of  S.~Bochner \cite{Bo} can be stated  as follows.

\begin{THEO}\label{th:Boch} Every orthogonal polynomial system $\{P_n(x)\}$ obtained as  a sequence of monic eigenpolynomials of a linear differential operator of order $2$ coincides (up to the action of the affine group on the variable $x$ and scaling) with one of  the sequences of  Hermite, Laguerre, Jacobi or Bessel orthogonal polynomials, i.e.,  with one of the classical orthogonal polynomial systems. \end{THEO}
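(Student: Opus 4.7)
The plan is to work directly with the structure of $L$ and reduce everything to Pearson's equation after bringing the leading coefficient to a normal form. First I would observe that in order for a second-order linear differential operator to possess a polynomial eigenfunction of every sufficiently large degree, applying $L$ to $x^n$ and matching the top degree forces $L = a_2(x)\partial^2 + a_1(x)\partial + a_0(x)$ with $\deg a_i \le i$; absorbing the constant $a_0$ into the spectral parameter, I may assume $a_0\equiv 0$. Writing $a_2(x) = \alpha x^2 + \beta x + \gamma$ and $a_1(x) = \delta x + \epsilon$, one gets $\lambda_n = \alpha n(n-1) + \delta n$, and these values are pairwise distinct for all sufficiently large $n$ (otherwise, a degree count contradicts exact solvability), which lets one build $P_n$ recursively.

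Next I would encode orthogonality of $\{P_n\}$ as a first-order ODE on a weight. Given the linear functional $\mathcal L$ associated with the bilinear form, represent it as $\mathcal L(f) = \int f(x)\,w(x)\,dx$ at the level of moments, allowing $w$ to be a formal object if no honest measure is available. Integration by parts applied to the symmetry relation $\mathcal L(Lf\cdot g) = \mathcal L(f\cdot Lg)$ yields Pearson's equation
$$(a_2 w)' = a_1 w,\qquad \text{equivalently}\qquad \frac{w'}{w} = \frac{a_1 - a_2'}{a_2}.$$

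The rest is a classification by $\deg a_2$. A real affine change of variable $x\mapsto ax+b$, combined with a rescaling of $L$, reduces $a_2$ to one of the normal forms $1$, $x$, $x^2$, or $1-x^2$, corresponding respectively to $\deg a_2 = 0$, $\deg a_2 = 1$, $\deg a_2 = 2$ with a double root, and $\deg a_2 = 2$ with two distinct roots. In each case I would solve Pearson's equation explicitly, obtaining weights proportional to $e^{-x^2/2}$, $x^\alpha e^{-x}$, $x^\alpha e^{-1/x}$, and $(1-x)^\alpha(1+x)^\beta$ respectively; the associated orthogonal polynomials are then identified as Hermite, Laguerre, Bessel, and Jacobi. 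Any remaining freedom in $a_1$ is absorbed either into a further affine normalisation on $x$ (for Hermite and Laguerre) or into the classical parameters $\alpha,\beta$ (for Bessel and Jacobi).

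The main obstacle, modest but not entirely routine, is the Bessel case. Here the formal weight $x^\alpha e^{-1/x}$ does not produce a positive Lebesgue measure on any interval of $\bR$, so the standard $L^2$ machinery is unavailable; one must argue directly at the level of the abstract moment functional encoded by Pearson's equation and verify that it is non-degenerate, so that the eigenpolynomials of $L$ genuinely form an orthogonal sequence with respect to a real bilinear form. Precisely this flexibility is why Problem~\ref{BK-problem} is posed with an arbitrary real bilinear form rather than a positive weight function, and it is the reason the Bessel family appears on equal footing with the other three.
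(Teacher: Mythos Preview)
The paper does not prove Theorem~A; it is quoted as Bochner's classical 1929 result, with a citation to \cite{Bo} and no argument supplied. There is therefore nothing in the paper to compare your proposal against. Your sketch is the standard route to Bochner's theorem---reduce to Pearson's equation $(a_2w)'=a_1w$ via formal self-adjointness, then classify by the degree and root structure of $a_2$---and is essentially what one finds in the classical literature.

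One small point to tighten: you write ``a real affine change of variable'' and list the normal forms $1$, $x$, $x^2$, $1-x^2$, but over $\bR$ a quadratic $a_2$ with negative discriminant (normal form $1+x^2$) is not affinely equivalent to $1-x^2$. You should either (i) argue that this case yields no infinite orthogonal sequence---the Pearson weight $(1+x^2)^{\kappa}e^{\epsilon\arctan x}$ has only finitely many finite moments, and the associated moment functional is degenerate beyond a fixed degree---or (ii) work over $\bC$, where the distinction disappears and the $1+x^2$ case is absorbed into the Jacobi family with complex parameters. The paper's own framing (Problem~\ref{CBK-problem} and Conjecture~\ref{con1}, which explicitly allow complex-valued parameters) suggests that option (ii) is the intended reading of ``the affine group'' in the statement.
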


Similarly,  the above mentioned result of  H.~Krall claims the following, see \cite{Kr2}.

\begin{THEO}\label{th:Krall}
Every linear differential operator of order $4$ solving the classical Bochner-Krall problem and which can not be obtained as a function of a linear differential operator of order $2$ is given  either by

\medskip
\noindent
\rm{(a)}   a 1-step Darboux transformation applied to  an  operator of order $2$ corresponding to   a system of Jacobi polynomials with at least one of the parameters $\al, \be$ being an  integer,

\medskip
\noindent
or by

\medskip
\noindent
\rm{(b)}   a 1-step Darboux transformation applied to an operator corresponding to a system of  Laguerre polynomials with an integer value of its parameter $\al$.
\end{THEO}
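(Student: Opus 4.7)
The proof plan is to start from the Bochner-Krall self-adjointness condition and use a Pearson/Darboux analysis to reduce order-$4$ exactly solvable operators to their order-$2$ ancestors.

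First I would exploit self-adjointness. As recalled in the Introduction, if an order-$4$ operator $L=\sum_{i=0}^{4}a_i(x)\partial^i$ with $\deg a_i\le i$ solves Problem~\ref{BK-problem} with weight $w(x)$, then $wL$ must coincide with its formal adjoint on polynomials. Matching coefficients of $\partial^j$ for $j=0,1,2,3$ yields three linear ODEs relating $w$ and $a_0,\ldots,a_4$; the leading one is a Pearson-type identity $(wa_4)'=w\,a_3^{*}$ for an explicit polynomial $a_3^{*}$ built from $a_3$ and $a_4'$. This forces $wa_4$ to lie in the classical Pearson family (Hermite, Laguerre, Jacobi, or Bessel) and parametrises candidate weights $w$ by $a_4$ together with finitely many constants.

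Next I would invoke the hypothesis that $L$ is not a polynomial in an order-$2$ operator, which means that $\{P_n^L\}$ is not the eigenpolynomial sequence of any Bochner order-$2$ operator on its own. Under this assumption, an indicial analysis of $w$ at the zeros of $a_4$ (written $a_4(x)=x^{\ep_1}(1-x)^{\ep_2}\cdots$ in suitable coordinates) combined with the Pearson relation forces $w=w_{\mathrm{cl}}(x)\cdot r(x)$, where $w_{\mathrm{cl}}$ is a classical weight and $r(x)$ is a rational factor producing additional point masses at the endpoints. The positivity requirement of the classical Bochner-Krall problem rules out Hermite (no boundary of its interval supports a mass) and Bessel (whose natural contour is not real), leaving only Jacobi and Laguerre.

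To expose the Darboux picture, I would use the fact that, precisely when $\al$ (or $\be$) is a nonnegative integer, the Jacobi or Laguerre order-$2$ operator $L_2$ admits an Euler-type factorisation $L_2-\mu=P\circ Q$ with $P,Q$ first-order operators whose coefficients are rational in $x$ with poles only at the boundary. The bispectral conjugate $\widehat L_2:=Q\circ P+\mu$ is again of order $2$, but the polynomial sequence $\widehat P_n:=QP_n$ satisfies a longer-than-three-term recurrence. Applying the general bispectral recipe of \cite{BHY1} then produces the sought-for irreducible order-$4$ operator $\widehat L$ with eigenpolynomials $\widehat P_n$, realising $\widehat L$ as a $1$-step Darboux transform of $L_2$. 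The integrality of the parameter is exactly the condition that $Q$ sends every polynomial to a polynomial; failure of integrality inserts a factor $x^{-\al}$ or $(1-x)^{-\be}$ at some finite degree, obstructing the construction.

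The hard part will be the second step, namely cutting the moduli space of candidate weights left by the Pearson ODEs down to exactly the two families of Jacobi/Laguerre with an integer parameter. The Pearson constraint alone is very flexible; one must extract the ``no reduction to order~$2$'' hypothesis, which is structural, into indicial information at the singular points of $a_4$. This is essentially the content of Krall's original calculation in \cite{Kr2}, and reproducing it cleanly in the bispectral/Darboux language requires careful bookkeeping of local exponents. Once that is in hand, the explicit Darboux factorisation of the third step and the matching with Krall's three exceptional families are routine.
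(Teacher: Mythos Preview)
The paper does not prove Theorem~\ref{th:Krall}. It is stated as a classical result of H.~L.~Krall \cite{Kr2} (1940), and the Darboux-transformation language used in the formulation is a later reinterpretation due to \cite{GHH} and \cite{GY}. There is therefore no ``paper's own proof'' to compare your proposal against; what you have written is a sketch toward a modern proof of a result the paper merely quotes as background.

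That said, a few remarks on your sketch relative to how the paper frames the result. First, the Darboux transformation the paper has in mind is on the \emph{difference} side: one factors the three-term recurrence operator $\La = D_1\circ D_2$, sets $\widehat\La = D_2\circ D_1$, defines $\widehat P_n = D_2 P_n$, and then invokes \cite{BHY1} to produce the new differential operator $\widehat L$ of higher order. Your third step instead factors the order-$2$ differential operator $L_2-\mu = P\circ Q$ and conjugates on that side; this is the bispectrally dual picture, and while the two are related, your sentence ``$\widehat L_2 := Q\circ P+\mu$ is again of order~$2$'' followed by ``produces the sought-for irreducible order-$4$ operator $\widehat L$'' conflates the two objects. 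You would need to be explicit about which side you are Darboux-transforming and why the resulting differential operator has order exactly~$4$.

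Second, Krall's original argument in \cite{Kr2} was not a Pearson/indicial analysis of the kind you outline but a direct and rather laborious classification of order-$4$ symmetric operators with polynomial eigenfunctions; the identification of his three exceptional families with $1$-step Darboux transforms of Laguerre/Jacobi came decades later. Your ``hard part'' --- extracting the irreducibility hypothesis as an indicial constraint that forces an integer parameter --- is indeed the crux, and you correctly flag it as essentially reproducing Krall's computation. As written, though, the step from ``Pearson forces $w = w_{\mathrm{cl}}\cdot r$ with $r$ rational'' to ``$r$ must be a point mass at an endpoint with integer exponent'' is asserted rather than argued, and that is where most of the work lies.
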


\medskip Let us now present our  conjectures and results related to Problems~\ref{CBK-problem} and ~\ref{prob:genBK}.
Concerning Problem~\ref{CBK-problem},   we propose the following general guess supported by  the results of  \cite{GHH, GY}.\footnote{As was pointed out by an anonymous referee Conjecture~\ref{con1} can be found on p. 153 of \cite{GrH}.} 

 \begin{conjecture}\label{con1}
 	Any sequence $\{P_n\}$ of  monic polynomials obtained as a sequence of eigenpolynomials of a linear differential  operator solving Problem~\ref{CBK-problem} 
	belongs to one of the following 2 classes:
 	
 	\bigskip

 \noindent	
 	{\rm 1)} A classical sequence of orthogonal polynomials with, in general, complex-valued parameters, i.e.,
 	
 	\smallskip
 	
	\noindent
 	{\rm (a)} Hermite polynomials $H_n$;
 	
 		\smallskip
 	
	\noindent
 	{\rm (b)} Laguerre polynomials $L^{(\alpha)}_n;$
 	
 		\smallskip
 	
	\noindent	
 	{\rm (c)} Jacobi polynomials $P^{(\alpha, \beta)}_n;$
 	
 	 		\smallskip

    \noindent
    {\rm (d)} Bessel polynomials $Y_n$.
 	
 	\bigskip
 	
	\noindent
 	{\rm 2)} A sequence of polynomials which can be obtained  from

 	\smallskip

 \noindent
 	{\rm (e)}  Laguerre polynomials $L^{(\alpha)}_n$ with the positive integer values of  $\alpha$ by applying a  finite number of Darboux transformations to the difference operator $\La$ with $$(\La)_n:=T + u(n)Id  + v(n)T^{-1}$$ occurring
	 in the right-hand side of \eqref{3-term};
 	
 		\smallskip

 \noindent	
 	{\rm 	(f)}  Jacobi polynomials $P^{(\alpha, \beta)}_n$  with the positive integer values of either $\alpha$, $\beta$ or both  $\alpha$ and  $\beta$ by applying a finite number of Darboux transformations to the difference operator $\La$.
 \end{conjecture}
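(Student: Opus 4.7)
The strategy I would adopt is to recast the problem as a bispectral one in the spirit of Duistermaat--Gr\"unbaum: any sequence $\{P_n(x)\}$ satisfying the hypotheses of Problem~\ref{CBK-problem} is simultaneously an eigenvector of $L$ (differential, in $x$) and of the Jacobi operator $\La(n)=T+u(n)\Id+v(n)T^{-1}$ (difference, in $n$). We are thus in a constrained bispectral situation where the difference side is forced to have order exactly two, while the differential side is allowed to have arbitrary order. The two generators of the bispectral algebra together with the procedure of \cite{BHY1} give a very rigid structure which should be exploited.

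First I would dispose of the base case $\ord L=2$. Here the argument of Bochner himself, suitably algebraised to allow complex parameters, shows that matching $L=a_2(x)\partial^2+a_1(x)\partial+a_0(x)$ with $\deg a_i\le i$ against the three-term recurrence forces one of the four classical families of class (1); this is essentially already contained in \thref{h:Boch} extended to $\Cset$. The bulk of the work is then the inductive step for $\ord L\ge 3$: given such an $L$, I would look for a factorisation $\La=D_1\circ D_2$ with $D_1$ upper triangular and $D_2$ lower triangular rationally in $n$, producing the Darboux partner $\widehat\La=D_2\circ D_1$ and, via the general construction of \cite{BHY1}, a differential operator $\widehat L$ of \emph{strictly smaller} order sharing $\widehat P_n=D_2P_n$ as its eigenpolynomials. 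Iterating, the sequence of orders of the successive differential partners must be strictly decreasing positive integers, hence the process terminates at an operator of order two---that is, at one of the four classical seeds. Reversing the chain of Darboux transformations exhibits the original $L$ as lying in class (1) or class (2).

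The key point on which the admissible seeds in (e), (f) are restricted is that a nontrivial Darboux factorisation of $\La$ on the difference side exists precisely at the integer values of the spectral shift: for Laguerre with parameter $\alpha\in\Znn$ and Jacobi with $\alpha\in\Znn$ or $\beta\in\Znn$, the kernel $\ker D_2$ consists of polynomials in $n$, so the factorisation preserves the polynomial character of the wavefunctions $\widehat P_n(x)$, while in generic (irrational) parameter ranges the would-be factors become transcendental and cannot be realised over the rational functions of $n$. A careful computation in the style of \cite{GHH, GY} of the discrete Wronskian of the relevant linear combinations would formalise this dichotomy and simultaneously explain why Hermite and generic-parameter Bessel do not appear in class (2).

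The main obstacle, and the reason this is only a conjecture rather than a theorem, is the \emph{existence} half of the inductive step: one must show that every $L$ solving Problem~\ref{CBK-problem} with $\ord L\ge 3$ \emph{actually admits} an order-reducing bispectral Darboux factorisation of its companion $\La$, ruling out sporadic bispectral pairs whose difference side is Jacobi but whose differential side refuses to unwind to a classical seed. In order-$3$ (the case settled in this paper) one can do this by a direct coefficient comparison, but for general order the argument presumably has to proceed by a global classification of the commutative algebras $\langle L,\La\rangle$ in which $\La$ is of Jacobi type, possibly through the adjoint symmetry $wL=(wL)^{\ast}$ forced on the weight functional and a dimension count on the variety of compatible factorisations of $\La$. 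A complete proof would, in effect, require a uniqueness theorem for the Darboux orbit of the classical Bochner operators under the constraint $\ord\La=2$, and the development of such a theorem is where I expect the main difficulty to lie.
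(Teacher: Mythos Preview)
The statement you are attempting is \emph{Conjecture}~\ref{con1}; the paper does not prove it and does not claim to. Immediately after stating it the authors write that ``the major difficulty in settling Conjecture~\ref{con1} is to show that the above list of cases is exhaustive which is possible in principle, but computationally is quite a challenging problem.'' So there is no paper proof to compare against, and you yourself correctly flag that your sketch is not a proof: the existence of an order-reducing Darboux factorisation for every $L$ with $\ord L\ge 3$ is precisely the missing ingredient, and nothing in your outline closes that gap.

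One correction of context: when you write ``In order-$3$ (the case settled in this paper) one can do this by a direct coefficient comparison,'' note that the paper's Theorem~\ref{th:main} concerns Problem~\ref{prob:genBK} (the generalised problem with a $4$-term recurrence and a difference operator $\La$ of order $3$), not Problem~\ref{CBK-problem} (the $3$-term case to which Conjecture~\ref{con1} refers). Moreover the paper's method there is not the Darboux-unwinding you propose but rather the $\ad$-condition $\ad^4_{\La}\la(n)=0$ of \leref{ad} combined with brute-force computation of the coefficients $p_k(n)$ and $b_j(n)$ in each of the subcases determined by $\deg\la$. So even as a heuristic your strategy is orthogonal to the techniques the authors actually deploy; whether a global Darboux-orbit argument can be made rigorous for arbitrary order remains, as the paper says, open.
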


\begin{remark} {\rm The fact that the families (a) - (d) solve Problem~\ref{CBK-problem} is classical. Analogous statement for the families (e)-(f) has been proven in \cite{GHH} and \cite{GY}.   The major difficulty in settling Conjecture~\ref{con1} is to show that the above list of cases is exhaustive which is possible in principle, but computationally is quite a challenge. }
\end{remark}

Concerning Problem~\ref{prob:genBK},    we have the following two conjectural claims. %

\begin{conjecture}\label{conj:order}
For any irreducible differential operator $L$ of order $k$ solving Problem~\ref{prob:genBK}, the order of the corresponding difference operator $\La$ is also $k$.
\end{conjecture}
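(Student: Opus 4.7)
The plan is to exploit the bispectral relation between $L$ (differential operator in $x$) and $\La$ (difference operator in $n$), linked by $LP_n=\lambda(n)P_n$ and $xP_n=\La P_n$. A direct computation shows $[L,x]P_n=[\La,M_\lambda]P_n$, where $M_\lambda$ denotes the multiplication operator $f(n)\mapsto\lambda(n)f(n)$; by induction this extends to the identity
\[
\mathrm{ad}_x^m(L)\,P_n=\mathrm{ad}_\La^m(M_\lambda)\,P_n\qquad\text{for every }m\ge 0,
\]
with $\mathrm{ad}_x(A):=[A,x]$ and $\mathrm{ad}_\La(B):=[\La,B]$.

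First, I would normalize $L$ to be of minimal order in its bispectral family. Under the irreducibility hypothesis this forces $\deg a_k(x)=k$, so $\lambda(n)$ is a polynomial in $n$ of exact degree $k$ with leading coefficient $c_k$ equal to the leading coefficient of $a_k$. The elementary differential identity
\[
\mathrm{ad}_x^m(L)=\sum_{i=m}^{k}\frac{i!}{(i-m)!}\,a_i(x)\,\partial^{\,i-m}
\]
has two distinguished consequences: $\mathrm{ad}_x^k(L)=k!\,a_k(x)$ (a pure multiplication operator) and $\mathrm{ad}_x^{k+1}(L)\equiv 0$. Since $a(x)\,P_n=a(\La)\,P_n$ for every polynomial $a\in\bC[x]$ (verified by induction on $\deg a$), the $n$-side translates to the polynomial relations
\[
\bigl(k!\,a_k(\La)-\mathrm{ad}_\La^k(M_\lambda)\bigr)\,P_n=0,\qquad \mathrm{ad}_\La^{k+1}(M_\lambda)\,P_n=0,
\]
valid for all $n$, together with the top-shift match $T^{k}$-coefficient $\pm\Delta^k\lambda=\pm k!\,c_k$ on both sides of the first identity.

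Second, I would interpret these identities as encoding an algebraic relation of Burchnall-Chaundy type between $\La$ and $M_\lambda$ in the algebra they generate modulo operators annihilating the sequence $\{P_n\}$. In this quotient, ``degree in $M_\lambda$'' corresponds to $\mathrm{ord}(L)=k$, while ``degree in $\La$'' corresponds to $\mathrm{ord}(\La)=d+1$. The irreducibility hypothesis on $L$ --- that $L$ is not obtained from a lower-order differential operator via a bispectral Darboux transformation in the sense of \cite{BHY1} --- forces the minimal polynomial identity to be indecomposable: any shorter relation would pull back through the Darboux correspondence to a factorisation of $L$ through a differential operator of order strictly smaller than $k$ sharing the eigenpolynomials $\{P_n\}$, against irreducibility. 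A degree balance then yields $\mathrm{ord}(\La)=k$.

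The main obstacle is the last step: a clean identification of the spectral-curve algebra and the proof that its irreducibility matches the degrees of the two factors. For $k=3$ the polynomial relations derived from Step 1 involve only finitely many unknowns in the coefficients $a_i(x),\,b_j(n)$ and can be resolved by direct elimination, yielding $d+1=3$; for general $k$ a conceptual argument combining Krichever-style spectral data with the Darboux formalism of \cite{BHY1} appears to be the natural route, the delicate point being to translate ``short algebraic relation on the $n$-side'' into an explicit Darboux reduction on the $x$-side.
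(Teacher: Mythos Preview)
The statement you are attempting to prove is Conjecture~\ref{conj:order}, which the paper explicitly leaves \emph{open}: there is no proof of it in the paper, and the authors state in their final remarks that even the case $k=3$ remains to be settled in future work. So there is no ``paper's own proof'' to compare against; your proposal would constitute a new result if it worked.

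It does not work, and the gap is already in Step~1. You assert that irreducibility forces $\deg a_k(x)=k$, so that $\lambda(n)$ has degree exactly $k$. This is false. The Appell-type operators in Theorem~\ref{th:main}, namely $L=\sum_{j=1}^{3}a_j\partial^j+x\partial$ with $a_3\neq 0$, are listed among the \emph{irreducible} order-$3$ solutions, yet here $a_3(x)$ is a nonzero constant (degree $0$) and $\lambda(n)=n$ is linear. More generally, every operator of Type~2) in Conjecture~\ref{conj:genBK} with $\ell<k$ has $\deg a_k(x)<k$. Thus the normalisation you rely on --- and with it the identification of $\deg_{M_\lambda}$ with $\mathrm{ord}(L)$ in your Burchnall--Chaundy heuristic --- collapses at the outset.

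Your Step~2 is, as you yourself acknowledge, only a sketch: the passage from ``a short algebraic relation on the $n$-side'' to ``an explicit Darboux reduction on the $x$-side'' is precisely the content of the conjecture, and you give no mechanism for it. The claim that for $k=3$ direct elimination ``yields $d+1=3$'' is also unsupported; the paper's computations in \S\S3--5 \emph{assume} $d+1=3$ from the start rather than derive it. In short, the proposal does not close the conjecture even for $k=3$.
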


The next claim similar to Conjecture~\ref{con1} gives   a  description of all irreducible operators solving Problem~\ref{prob:genBK}.

\begin{conjecture}\label{conj:genBK} For any positive integer $k$, the irreducible differential operators of order $k$ solving Problem~\ref{prob:genBK} belong to one of the following  two types:

\smallskip
\noindent
$${ \rm 1)} \hskip2cm  \quad L= \sum_{j=1}^ka_jx^{j-1}\partial^j +x\partial,\quad a_j \in \bC,\; a_k\neq 0,$$
generating the so-called $(k-1)$-orthogonal polynomials, see \cite{BChD};

\smallskip
\noindent
$${ \rm 2)} \hskip2cm  L = q^\prime(G)G + x\partial, \quad \quad \quad \quad \quad \quad \quad \quad \quad \quad \quad$$
where  $q(t)$ is any complex polynomial of degree $k/\ell$ without a constant term  with $\ell$ being any divisor of $k$. Additionally, $G := ( \sum_{m=0}^{\ell-1} a_m(x\partial)^m)\partial,\quad  a_m \in \bC, a_{\ell-1}\neq 0,$ see Theorem 2.3 of \cite{Ho1}. 
\end{conjecture}

\begin{remark} {\rm Both types of operators appearing in Conjecture~\ref{conj:genBK} together with their properties were discussed  in some detail in  publications  \cite{Ho,Ho1}.}
\end{remark}

\begin{remark}{\rm
Notice that the operators
$$ L= \sum_{j=1}^ka_j\partial^j +x\partial$$  generating the Appell polynomials \cite{App}
are included in Type 2  with $\ell = 1$, see \cite{Dou, Ho1}. }
\end{remark}

\medskip
\begin{example} Conjecturally, all the irreducible  operators of order 4 solving Problem~\ref{prob:genBK} are given by:

 \noindent
$${ \rm 1)} \hskip2cm L= \sum_{j=1}^4a_jx^{j-1}\partial^j +x\partial,\quad a_j \in \bC, a_4\neq 0$$
generating the so-called $3$-orthogonal polynomials;

\smallskip
\noindent
$$ {\rm 2^\prime)}  \hskip2cm L= \sum_{j=1}^4 a_j\partial^j +x\partial,\quad a_j \in \bC, a_4\neq 0 \quad \quad$$
generating the Appell polynomials, see \cite{App, GH};

\smallskip
\noindent
$${\rm 2^{\prime\prime})}  \hskip2cm L=b_2G^2 +b_1G +x\partial, \quad b_2\neq 0,\quad \quad\quad \quad$$
 where $G = (a_1x\partial+a_0)\partial,\; a_1\neq 0$.

\smallskip
 The difference operator $\La$ in all these cases is of order $4$, i.e.,  $d=3$.

\end{example}

\medskip
The main result of the present paper is a proof of Conjecture~\ref{conj:genBK} in the first non-trivial case of differential operators $L$ of order $3$.

 \begin{theorem}\label{th:main} All irreducible differential  operators $L$ of order 3 solving Problem~\ref{prob:genBK} whose corresponding difference operators $\La$  also have order $3$  are given by:

\noindent
  $${\rm 1)} \hskip2cm L= \sum_{j=1}^3a_jx^{j-1}\partial^j +x\partial,\quad  a_j \in \bC, a_3\neq 0,$$
generating the  $2$-orthogonal monic polynomials, see \cite{BChD}. They satisfy the $4$-term recurrence relation:
$$xP_n(x)=P_{n+1}(x) -(a_1+2na_2+3n(n-1)a_3)P_n(x)+n(a_2+(3n-3)a_3)$$  $$(a_1+(n-1)a_2+
(n-1)(n-2)a_3)P_{n-1}(x)-n(n-1)a_3(a_1+(n-1)a_2+(n-1)(n-2)a_3)$$ $$(a_1+(n-1)a_2+(n-1)(n-2)a_3)P_{n-2}(x),$$
with the standard initial conditions: $P_{-2}(x)=P_{-1}(x)=0, P_0(x)=1$;

\smallskip
 \noindent
 $${\rm 2) } \hskip2cm L= \sum_{j=1}^3 a_j\partial^j +x\partial,\quad a_j \in \bC, a_3\neq 0 $$
generating the monic Appell polynomials, see \cite{App, GH}.
They satisfy the $4$-term recurrence relation:
$$xP_n(x)=P_{n+1}(x)-a_1P_n(x)-a_2nP_{n-1}(x)-a_3 n (n-1)P_{n-2}(x),$$
with the standard initial conditions: $P_{-2}(x)=P_{-1}(x)=0, P_0(x)=1$.

 \end{theorem}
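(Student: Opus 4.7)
The plan is to translate the hypothesis that $\La$ has order $3$ into a finite polynomial system in the coefficients of $L$, and then to solve it by case analysis.

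\medskip

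First I would write $L = a_3(x)\partial^3 + a_2(x)\partial^2 + a_1(x)\partial$ with $a_i(x) = \sum_{j=0}^i \al_{ij}x^j$, giving nine complex parameters. A direct expansion yields
$$Lx^n = \la(n)x^n + \mu_1(n)x^{n-1} + \mu_2(n)x^{n-2} + \mu_3(n)x^{n-3},$$
with eigenvalue $\la(n) = n\al_{11} + n(n-1)\al_{22} + n(n-1)(n-2)\al_{33}$ and explicit $\mu_j(n)$ of degree $j$ in $n$. Assuming the $\la(n)$ are pairwise distinct (generic and necessary for the problem to be well-posed), the monic eigenpolynomial $P_n = x^n + \sum_{k<n} c_{n,k}x^k$ is uniquely determined by
$$(\la(n)-\la(k))c_{n,k} = \mu_1(k+1)c_{n,k+1} + \mu_2(k+2)c_{n,k+2} + \mu_3(k+3)c_{n,k+3}, \quad c_{n,n}=1,$$
so each $c_{n,k}$ is an explicit rational function of $n$ and the $\al_{ij}$.

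\medskip

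Next I would apply the bispectral identity obtained by acting by $L$ on $\La P_n = xP_n$ and using $LP_n = \la(n)P_n$:
$$[L,x]P_n = (\la(n+1)-\la(n))P_{n+1} + \sum_{j\ge 1} b_j(n)(\la(n-j)-\la(n))P_{n-j}.$$
The hypothesis that $\La$ has order $3$ says $b_j \equiv 0$ for $j\ge 3$, so the expansion of $[L,x]P_n$ in $\{P_m\}$ must be supported on $\{n+1, n-1, n-2\}$. Since $[L,x] = 3a_3\partial^2 + 2a_2\partial + a_1$ is a fully explicit second-order operator, a direct computation shows that the vanishing of the $P_n$-coefficient of $[L,x]P_n$ is automatic, so the first nontrivial constraint is the vanishing of the $P_{n-3}$-coefficient. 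Expressing this vanishing via the $c_{n,k}$ recursion, it becomes a rational identity in $n$ whose numerator is a polynomial in $n$ and the $\al_{ij}$; requiring it to hold identically gives a finite polynomial system in the parameters, and the analogous vanishing for $j\ge 4$ follows by a recursive argument from the second-order nature of $[L,x]$.

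\medskip

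The remaining task is solving this polynomial system, which I would do by case analysis on the leading behaviour of $a_3(x)$. If $\al_{33} \ne 0$, comparing the highest powers of $n$ in the $P_{n-3}$-vanishing identity produces constraints on the lower $\al_{ij}$ that cannot simultaneously hold for all $n$ (in particular, they force $\La$ to have order strictly less than $3$), ruling out this case. Assuming $\al_{33}=0$, if $\al_{32}\ne 0$ the identities rigidly pin down $a_3(x) = \al_{32}x^2$, $a_2(x)$ proportional to $x$, and $a_1(x) = \al_{11}x + \al_{10}$, which after the normalization $\al_{11}=1$ available by rescaling $x$ gives exactly Type~1. If $\al_{32}=0$ too, parallel analysis on the remaining identities forces $a_3$ and $a_2$ to be constants in $x$ and $a_1(x) = x + \const$, yielding Type~2 (Appell). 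Degenerations in which more $\al_{ij}$ vanish either reduce to subcases of Type~1 or Type~2 or force $\La$ to have order $<3$. Once the form of $L$ is determined, the explicit formulas for $b_0(n), b_1(n), b_2(n)$ stated in the theorem follow by directly unwinding the $c_{n,k}$ recursion.

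\medskip

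The main obstacle will be the case analysis itself: ruling out hybrid configurations in which several $\al_{ij}$ are simultaneously nonzero with no counterpart in Type~1 or Type~2 requires a careful comparison of leading and subleading coefficients in $n$ of the extracted polynomial identities, with attention to multiple subcases. The irreducibility assumption and the strict order-$3$ requirement on $\La$ will be used crucially to exclude configurations that either reduce to Darboux transforms of lower-order operators or collapse $\La$ to a three-term recurrence — the latter being incompatible with the odd order of $L$ by the parity obstruction of classical Bochner-Krall theory.
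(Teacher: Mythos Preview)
Your overall strategy---compute the eigenpolynomial coefficients, extract $b_j(n)$ from the expansion of $xP_n$ (equivalently of $[L,x]P_n$), and impose $b_j\equiv 0$ for $j\ge 3$---is close in spirit to what the paper does, but there is a genuine gap: the claim that ``the analogous vanishing for $j\ge 4$ follows by a recursive argument from the second-order nature of $[L,x]$'' is false. The fact that $[L,x]$ has order~$2$ only bounds the \emph{degree} of $[L,x]P_n$; it says nothing about how far down its expansion in the basis $\{P_m\}$ reaches. The paper in fact exhibits an explicit counterexample in the course of its analysis: with $a_3(x)=1$, $a_2(x)=x^2$, $a_1(x)=a_{11}x$ one has $b_3=b_4=0$ but $b_5\not\equiv 0$ (formula \eqref{b5}). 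So the single condition $b_3\equiv 0$ is not sufficient, and the polynomial system you extract is incomplete.

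This gap is not a technicality; it is exactly where the difficulty lies. Your case split is organized by $\deg a_3$, but the paper's is organized by $\deg_n\la(n)=\al_{11}n+\al_{22}(n)_2+\al_{33}(n)_3$, and the two do not line up. When $\la$ is cubic ($\al_{33}\ne 0$) the paper does \emph{not} use the $P_{n-3}$-coefficient at all: it exploits the iterated identity $\ad_\La^3\la=6a_3(\La)$ (Lemma~\ref{lemma:ad3:poly}) to show directly that $b_2=b_1=0$ and $b_0$ is constant, so $\La$ has order~$1$. When $\la$ is quadratic ($\al_{33}=0$, $\al_{22}\ne 0$)---a case you silently fold into your ``$\al_{32}\ne 0$ gives Type~1'' branch---no operator of either Type survives, and ruling it out requires computer-aided computation of $b_3$ and then, crucially, of $b_5$ in the subcase where $b_3$ happens to vanish. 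Only when $\la$ is linear is $b_3\equiv 0$ by itself decisive, and there your leading-term-in-$n$ idea matches what the paper does. To repair your argument you must either prove a genuine finiteness lemma for the $b_j$ (none is apparent), or, as the paper does, push past $b_3$ in the quadratic case and invoke the stronger $\ad_\La^3\la$ relation in the cubic case.
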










\begin{remark} {\rm Theorem~\ref{th:main} will follow from the detailed study of three special cases presented in \S~3, \S~4, and \S~5 respectively. Its proof  contains several new ideas, but also a substantial amount of explicit calculations which we were able to carry out by hands in \S~3 and \S~5. In \S~4 we had  to use Wolfram Mathematica package since these calculations were too heavy. It is of course highly desirable to find an alternative proof avoiding such an amount of explicit calculations and which might be applicable for a possible attack on our general Conjecture~\ref{conj:genBK}, but at the moment we do not see such a possibility.
}
\end{remark}

\medskip
\noindent
{\it Acknowledgements.} The first and the third authors are sincerely grateful to the Mathematics Department   of Stockholm University for the hospitality in December 2014 and April 2015, when this project was initiated. The first and the second authors want to thank the Department of Theoretical Physics, Nuclear Physics Institute,
Academy of Sciences, 250\,68 \v{R}e\v{z} near Prague for the hospitality in December 2016.    The research of the first author has been partially supported by Grant KP-06-N 62/5
of Bulgarian National Science Fund.   The third author  has been supported by the Czech Science
Foundation (GA\v{C}R) within the project 21-07129S. Finally, the authors are sincerely grateful to both anonymous referees for their constructive criticism and detailed suggestions which helped us to improve the exposition.  


 \section{Preliminary facts}
 
 Let us write the three-term recurrence relation for monic polynomials in the form
 
 \begin{equation}\label{eq:3trm}
 xP_n =P_{n+1} + u_nP_n +v_{n-1}P_{n-1}.
 \end{equation}

 \begin{proof}[Proof of Proposition~\ref{prop:translate}] It is a well-known fact  that  polynomials in a sequence $\{P_n\}$ of (monic orthogonal) polynomials which satisfy a $3$-term recurrence relation \eqref{eq:3trm} with real $\{u_n\}$ and positive $\{v_n\}$ are all real-rooted and having simple interlacing zeros. As a special case the same property holds for the sequences of eigenpolynomials solving the classical Bochner-Krall problem.  The converse claim that a sequence of monic polynomials satisfying \eqref{eq:3trm}  with real $\{u_n\}$ and $\{v_n\}$ which have all real and interlacing roots necessarily has 
all $v_n$ positive is an immediate consequence of  Wendroff's theorem which, in its turn, is a simple consequence of Sturm's algorithm and should be better known, see \cite {We}. 
\end{proof} 
 
 
 
 

 



 \
Problem~\ref{prob:genBK} asks to describe  all exactly solvable linear differential operators   $L$  that have a sequence $\mathcal P(x)=\{P_n(x)\}$  of monic eigenpolynomials with eigenvalues $\la_n$, i.e.,

 \begin{equation} \label{Bis1}
 LP_n(x) = \la_n P_n(x)
 \end{equation}
 and which at the same time satisfy a difference equation of the form

  \begin{equation} \label{Bis2}
 \La \mathcal P(x) = x \mathcal P (x).
 \end{equation}

 In the next lemma we deduce a simple, but very useful necessary condition for the solvability of the latter problem for differential operators of order $3$. Let us assume that a polynomial sequence  $\mathcal P(x)=\{P_n(x)\}$ satisfying \eqref{Bis1} and \eqref{Bis2} exists. Define
 $$\ad_x L := [x, L]$$ \quad \text{and} \quad  $$ \ad^{j+1}_x L := \ad_x (\ad^j_x L).$$
  It is obvious that $\ad_x L$ is a univariate differential operator of order less than the order of $L$. Below we assume that  the order  of  $L$ equals 3, i.e.,

  \begin{equation}\label{eq:L}
  L = a_3(x)\partial^3 + a_2(x)\partial^2 + a_1(x)\partial.
  \end{equation}

  Let us deduce  that
 \begin{equation}\label{eq:j4}
 \ad^4_x L \equiv 0.
 \end{equation}
 
Indeed, 
$$\ad_x L(u(x))=x L(u(x))-L(x u(x))=-2 a_2(x) u'(x)-3 a_3(x) u''(x)-a_1(x) u(x).$$

Similarly,
$$\ad^2_x L(u(x))=L\left(x^2 u(x)\right)+x (x L(u(x))-L(x u(x)))-x L(x u(x))=6 a_3(x) u'(x)+2 a_2(x) u(x).$$

Finally,
\begin{equation}\label{eq:imp}
\ad^3_x L(u(x))=-6 a_3(x) u(x) \quad \text{and} \quad \ad^4_x L(u(x))\equiv 0.
\end{equation}

 \medskip
 The following result obtained first in \cite{DG} will be crucial in all our calculations. We state it in a form suitable for the present paper. For a column vector $\mathcal {M}=(\mu_0,\mu_1,\dots)^T$, in what follows we use the notation 
 $$\ad^{j+1}_\La\mathcal M :=\ad_{\La}(\ad^j_{\La}\mathcal M) $$ with $\ad_{\La}\mathcal M$ meaning $[\La, \mathcal M]$, i.e., the commutator of the operator $\La$ and   the operator $\mathcal M \cdot \mathbb{1}$ acting diagonally on a sequence of numbers by multiplication of  its $j$-th entry by $\mu_j,\; j=0,1,\dots$. (However slightly abusing our notation  we will omit the symbol $\mathbb{1}$ below).   

 \ble{ad}\label{lm2.1} Assuming that the pair $(L,\Lambda)$ solves Problem 2 with $L$ being a differential operator of order $3$  and 
 denoting by $\mathcal L =(\la_0, \la_1,\dots)^T$, the sequence of its eigenvalues, one has the relation
 \beq \label{ad}
 \ad^4_{\La}\mathcal L \equiv 0,
 \eeq
 where the left-hand side is understood as a difference operator acting on column vectors. 
 \ele
 

 \proof  
 
 
 

By definition, $[x, L](P_n(x)) = x L(P_n(x)) -  L (xP_n(x))$. Introducing the column vector $\mathcal P=(P_0(x),P_1(x),\dots )^T$ of monic eigenpolynomials of $L$, we have the equality of vectors of functions 
 \[ x L(\mathcal P(x)) -  L (x\mathcal P)
=  x \mathcal L( \mathcal P(x)) -  L (\La (\mathcal P(x))), 
\]
which follows from \eqref{Bis2}. Here (as above) the expression $\mathcal L (\mathcal P(x))$ means that each monic eigenpolynomial $P_j(x),\; j=0,1,\dots$ is multiplied by its own eigenvalue $\la_j$. 
Using the fact that both $\mathcal L$ and $\La$ are independent of $x,$ we can  rewrite the above as
\[
x \mathcal L (\mathcal P(x)) -  L (\La (\mathcal P(x))) =   \mathcal L (x\mathcal P(x)) -    \La (L (\mathcal P(x)))=\mathcal L (\Lambda(\mathcal P(x)))- \La (\mathcal L (\mathcal P(x))).
\]
Summarizing, we obtain
\[
[x, L](\mathcal P(x)) = \mathcal L (\La (\mathcal P(x))) -  \La (\mathcal L (\mathcal P(x)))  =  [\mathcal L, \La](\mathcal P(x)),
\]
which is equivalent to  
\begin{equation}\label{eq:impeq}
\ad_{\La}\mathcal L(\mathcal P(x)) = - \ad_{x}L(\mathcal P(x)). 
\end{equation}

Observe that the operator in the left-hand side originally acts on column vectors of numbers while the operator in the right-hand side acts on functions in the variable $x$. The above equality means that their extended action on the sequence $\mathcal P$ of monic eigenpolynomials coincide. 

\smallskip
Notice that \eqref{eq:impeq} can be also proven directly by calculating its left-hand and right-hand sides explicitly. Namely, both  give the following expression 
$$
(\la_n-\la_{n+1})P_{n+1}(x)+\sum_{j=0}^d b_j(n)(\la_n-\la_{n-j})P_{n-j}(x)
$$
for the $n$-th coordinate in the image. 

\medskip
 Formula~\eqref{eq:impeq} implies that for $j=1,2,\dots$, 
 \begin{equation}\label{eq:almost}
 \ad^j_x L(\mathcal P(x)) = (-1)^j\ad^j_{\La}\mathcal L(\mathcal P(x)).
 \end{equation} 
 
Now the identity  \eqref{eq:j4} together with \eqref{eq:almost} imply 
 
 $$\ad^4_{\La}\mathcal L \equiv 0$$
 since the left-hand side of \eqref{eq:almost} for $j=4$ is vanishing identically and $\ad^4_{\La}\mathcal L$  is a difference operator whose kernel contains number sequences $\mathcal P(x)$ for every fixed real $x$.  
 \smallskip
 (Formula \eqref{eq:almost} can be obtained by explicit calculations similar to that following \eqref{eq:impeq}, but we omit them here).
 \qed
 
 \begin{remark}
{\rm  Equation \eqref{ad} will be referred to as the \emph {$\ad$-condition} and will be our main technical  tool. Appendix I below contains a purely computational proof of Lemma~\ref{lm2.1}  which we decided to include  after  a number of questions raised by an anonymous referee.}
 \end{remark} 

 \medskip\noindent
{\em Notation.} Below we will impose the restriction that $\La$ is a monic difference operator of order 3, i.e.,
\begin{equation}\label{Lambda}
(\La)_n = T + \sum\limits_{i=0}^{2}b_i(n)T^{-i},
\end{equation}
 comp. Conjecture~\ref{conj:order}.  In what follows we will use the following notation for the coefficients of the polynomials $a_3(x), a_2(x)$ and $a_1(x)$ respectively:
 $$
 \begin{cases}
 a_3(x)=a_{33}x^3+a_{32}x^2+a_{31}x+a_{30};\\
 a_2(x)=a_{22}x^2+a_{21}x+a_{20};\\
  a_1(x)=a_{11}x+a_{10}.\\
 \end{cases}
 $$

\subsection{ The ad-condition}

Notice that if $L$ is a differential operator of order $3$ then the sequence $\mathcal L=\{\la_n\}_{n=0}^\infty$ is the sequence of values of a polynomial  of degree  at most $3$ in the variable $n$ at consecutive non-negative integer points. Indeed,  the entries of $\la_n$ come from the differentiation of  the term $x^n$ in $P_n(x)$. More exactly,
$$\la_n=(n)_3\cdot a_{33}+(n)_2\cdot a_{22}+n\cdot a_{11}.$$

\smallskip

\ble{lemma:ad3:poly}

The operator $\ad^3_\La \mathcal L$ is of the form:
\beq \label{ad3-1}
\ad^3_\La \mathcal L =\al\La^3+  \beta \La^2 + \gamma \La + \delta I = 6a_3(\La),
\eeq
where the coefficients $\al, \be, \ga, \de \in \Cset$ are independent of $n$.
\ele

\proof

From the condition
$\ad^4_{\La} \mathcal L\equiv 0$ it follows that $\ad^3_{\La} \mathcal L$ is a function of the difference operator $\La$ which we denote by $q(\La)$. Moreover in our situation $q(\La)$ is a polynomial of  degree  at most 3 in $\La$ since it contains $T$ in  at most the third degree.   Also notice that
\[
\ad^3_xL =(-1)^36a_3(x).
\]

Having in mind that
\[
\ad^3_xL( \mathcal P(x)) = (-1)^3 \ad_{\La}^3\mathcal L (\mathcal P(x)),
\]
we obtain
\[
(-1)^36a_3(\La)(\mathcal P(x)) = (-1)^3 \ad_{\La}^3\mathcal L (\mathcal P(x)),
\]
i.e., $q(t)  =6 a_3(t).$
\qed

 \medskip
 \begin{remark} {\rm Formula~\ref{ad3-1} says that the leading polynomial coefficient $a_3(x)$ coincides with $\frac{1}{6}(\al x^3+\be x^2 +\ga x+\de)$, i.e., in the above notation $a_{33}=\al/6,\;a_{32}=\be/6, a_{31}=\ga/6,\; a_{30}=\de/6$ which we will  use below.}

 \end{remark}
 
  \begin{remark} {\rm Observe that if two finite-dimensional linear operators commute and one of them has a simple spectrum then the other is a (polynomial) function of the first one. This fact  is unknown and probably false for commuting infinite-dimensional operators. However the latter Lemma provides such a statement in our particular case.
  
  }
  \end{remark} 

 \section{Case when $\mathcal L $ is given by a linear polynomial}
As we mentioned above, the sequence $\mathcal L=\{\la_n\}_{n=0}^\infty$ is given by  a polynomial in $n$ of degree at most $3$. Here we will discuss  the simplest case when $\la_n$ is  given by a linear polynomial. In terms of the differential operator $L$ given by \eqref{eq:L},  the linearity of $\la_n$ means that  $\deg a_3(x) <3$ (which is equivalent to $a_{33}=\al/6=0$),  $\deg a_2(x)<2$ (which is equivalent to $a_{22}=0$) and $\deg a_1(x)=1$ (which is equivalent to $a_{11}\neq 0$). Additionally, observe that in this case we can without loss of generality assume that $\la_n \equiv n$.  Indeed, if  $\la_n = \mu n + \nu,$ then we can consider $L - \nu$ instead of  $L$. Finally, we can divide both sides by $\mu$, which results in  $\la_n \equiv n$.

\medskip
Assuming that $\la_n \equiv n$, we will separately study the following  3 (sub)cases:

(i) $a_3(x)$ is a quadratic polynomial, i.e., $a_{33}=0; a_{32}=\be/6\neq 0$;

(ii) $a_3(x)$ is a linear polynomial, i.e., $a_{33}=a_{32}=0$, $a_{31}=\ga/6\neq 0$;

 (iii) $a_3(x)$ is a non-vanishing constant, i.e., $a_{33}=a_{32}=a_{31}=\ga/6=0$, $a_{30}=\de/6 \ne 0$.

\subsection{Case $\deg a_3(x)=2 \Leftrightarrow \be \neq 0$}\label{sub:3.1}

Since $\la_n  \equiv n$, we get
\[
L = (a_{32} x^2 + a_{31} x +a_{30})\partial_x^3 +(a_{21}x +a_{20})\partial^2 + (x +a_{10})\partial_x,
\]
where $a_{32}\neq 0$.
Rescaling $L$ we can achieve $a_{32} =1$.   Shifting $x$ we can additionally assume that $a_{10}=0$. (Both changes  only result in somewhat shorter formulas, but otherwise are not essential.) Thus without loss of generality, we can assume that  the operator $L$ is of the form
\beq  \label{L}
L = ( x^2 + a_{31} x +a_{30})\partial_x^3   +(a_{21}x +a_{20})\partial^2 + x \partial_x.
\eeq

To prove the required result, we do the following. Expanding the polynomial $P_n(x)$  as
\[
P_n(x)  = x^n + p_1(n)x^{n-1} + \ldots + p_k(n)x^{n-k} + \ldots,
\]
we are going to compute the coefficients $p_k(n)$ for the first few values of $k$. Then  we  compute the coefficients $b_j(n)$ of the recurrence relation $\La$. The condition that the polynomials $P_n(x)$ satisfy a 4-term relation implies that $b_3(n) \equiv 0$. We will use the following identities:
\beq \label{a-j}\begin{cases}
\begin{split}
	p_1(n) &= p_1(n+1) + b_0(n);\\
	p_2(n) &= p_2(n+1) + b_0(n)p_1(n) +b_1(n);\\
	p_3(n) &= p_3(n+1) + b_0(n)p_2(n) +b_1(n)p_1(n-1)   +b_2(n);    \\
	p_4(n) &= p_4(n+1) + b_0(n)p_3(n) +b_1(n)p_2(n-1)   +b_2(n) p_1(n-2) + b_3(n).
\end{split}
\end{cases}
\eeq

These relations imply the following expressions for $b_j(n)$:
\beq  \label{b-j}\begin{cases}
\begin{split}
	b_0(n) &= p_1(n) - p_1(n+1); \\
	b_1(n)& =   p_2(n)-  p_2(n+1) - b_0(n)p_1(n);\\
	b_2(n)  & = p_3(n)-   p_3(n+1) -b_0(n)p_2(n) - b_1(n)p_1(n-1);   \\
	b_3(n)  &= p_4(n)  -    p_4(n+1) - b_0(n)p_3(n) -b_1(n)p_2(n-1)  -b_2(n) p_1(n-2).
\end{split}\end{cases}
\eeq

We want to conclude that under our assumptions, we get $a_{31} = a_{30} =0$.
The following statement being the major result of this section claims exactly this.

\bpr{lin}  If a differential operator $L$ of the form
\beq  \label{LL}
L = ( x^2 + a_{31} x +a_{30})\partial_x^3   +(a_{21}x +a_{20})\partial^2 + x\partial_x
\eeq
has a sequence of polynomial eigenfunctions  $\{P_n(x)\},  \; n=0, 1,\ldots$ which  satisfies a 4-term recurrence relation, then $a_{31}= a_{30} = a_{20} =0$.  Thus the operator $L$ will have the form $\rm 1)$ from Theorem \ref{th:main}, i.e., 
\[
L = \sum_{j=1}^{3}a_j x^{j-1}\partial^j +x\partial.
\]
\epr

\proof

We will actually show that if $b_3(n) \equiv 0$ (which is a necessary condition for the sequence $\{P_n(x)\}$ to satisfy a 4-term recurrence), then  all three parameters $a_{31}, a_{30}, a_{20}$ vanish. To do this, we will not need the  explicit form of $b_3(n)$ which is very cumbersome, but it would be enough to obtain  polynomial coefficients at $a_{31}$,   $a_{30}$   and $a_{20}$ in the presentation of $b_3(n)$.

The reason for the latter sufficiency  is as follows. It has been shown in  \cite{Ho1}  that if $a_{31} = a_{30} = a_{20} =0$, then $b_3(n) \equiv 0$. Below we will refer to $a_{31},  a_{30},   a_{20}$   as \textit{parameters}.

We will calculate several terms in the expansion of $b_3(n)$ (as well as other $b_i(n)$'s) in powers of parameters with coefficients being powers of $n$. Namely, we denote   by $``\sim "$ the following $2$-step truncation of the expansion of $b_i(n)$ in the latter powers. Firstly, we discard all terms containing powers of parameters exceeding $1$. Secondly, for each parameter, we only keep the highest power of $n$ in its coefficient. For example, 

\[
b_3(n)  \sim   n^{m_1} \cdot a_{31} + n^{m_2}\cdot a_{30}   + n^{m_3} \cdot a_{20},
	\]
	where the numbers $m_j$ are some nonnegative integers   and the relation itself means that $b_3(n) $ is  equal to the r.h.s. modulo some terms that contain    higher powers of  parameters.   Here $n^{m_j}$ is the leading power of $n$ in the coefficient of the corresponding parameter. 

\medskip
  To compute these leading terms,  we first calculate the corresponding terms  in $4p_4(n)$.  Namely, set
$$4p_4(n)   = c n^{q_0} +  a_{31} n^{q_1}  + a_{30}n^{q_2}   + a_{20} n^{q_3},$$
    where $c \in \Cset$  does not depend on the parameters.
 (Later we are going to use a similar expansion for  $b_3(n)$).

\medskip
Using the formulas \eqref{b-j} and 
applying the operator \eqref{L}  to $P_n(x)$,  we get
\begin{eqnarray*}
 &L(P_n) &=( x^2 + a_{31} x +a_{30})  ((n)_3\cdot x^{n-3}    +   p_1(n)(n-1)_3\cdot x^{n-4} +\ldots +p_k(n)(n-k)_3\cdot x^{n-k-3}+\ldots)\\
&+&(a_{21}x +a_{20})((n)_2\cdot x^{n-2}    +   p_1(n)(n-1)_2\cdot x^{n-3}+ \ldots +p_k(n)(n-k)_2\cdot x^{n-k- 2}+\ldots)\\
&+& x(n\cdot x^{n-1}    +   p_1(n)(n-1)\cdot x^{n-2}+ \ldots +p_k(n)(n-k)\cdot x^{n-k-1}+\ldots)\\
&=& n(x^n +p_1(n)x^{n-1}  +\ldots + p_k(n)x^{n-k} +\ldots).
\end{eqnarray*}

Comparing the coefficients of  $x^{n-1}$ in  both sides of the latter equality, we obtain

\[
p_1(n)  =  (n)_3  +(n)_2\cdot a_{21}.
\]

From \eqref{b-j} 
we obtain
\[
b_0(n)  = - \Delta (p_1(n))  =-3(n)_2 - 2n\cdot a_{21}
\]
which means that both $p_1(n)$  and $b_0(n)$ do not contain our parameters.  We observe that in order to calculate  $b_3(n) $,   we  need  first to compute  $b_1(n)$  and   $b_2(n)$.

Starting with $p_2(n)$ we obtain
\[
2p_2(n)  = (n)_3 \cdot a_{31}+  (n)_2\cdot a_{20} + p_1(n-1)p_1 (n).
\]
For later use,  notice that
\[
p_2(n)  \sim   \frac{1}{2} \left[ n^3 \cdot a_{31}+  n^2\cdot a_{20} + n^6\right].
\]

Using \eqref{b-j} again, we get
\[
\begin{split}
b_1(n) &=      -\Delta (p_2(n))     - b_0(n)p_1(n) \\
&  =  -\frac{3(n)_2}{2}\cdot a_{31}  - n \cdot a_{20}  +    \frac{1}{2}p_1(n)\Delta^2 (p_1(n-1))
\end{split}
\]
which gives
\[
b_1(n)  \sim  -\frac{3n^2}{2} \cdot  a_{31}   -   n\cdot a_{20}    +  3 n^4.
\]

Let us  now apply a similar procedure to find $p_3(n)$.  From the above expression for $L(P_n(x))$  we get
\[
\begin{split}
3p_3 (n)&=  (n)_3\cdot a_{30} +   p_1(n) (n-1)_3\cdot a_{31} +   p_1(n)(n-1)_2\cdot a_{20}+  p_1(n-2)  p_2(n)\\
&=  \frac{1}{2}p_1(n-2)\left((n)_3\cdot a_{31} +  (n)_2 \cdot a_{20}+ p_1(n-1)p_1 (n)\right)\\
&  +(n)_3 \cdot  a_{30} +  p_1(n) (n-1)_3 \cdot a_{31}  +   p_1(n)(n-1)_2\cdot a_{20}.
\end{split}
\]

The latter relation implies that
\[
p_3 (n)    \sim   \frac{n^3}{3}\cdot a_{30} +   \frac{n^6}{2}\cdot a_{31}   + \frac{n^5}{2} \cdot  a_{20}
  + \frac{1}{3!} n^9.
\]

\medskip
For $b_2(n)$,  we get
\[
b_2(n) = -  \Delta (p_3(n))     - b_0(n) p_2(n)  - b_1(n)p_1(n)
\]
which implies that
\[
b_2(n)   \sim  n^4\cdot  a_{31} -n^2 \cdot a_{30}+\frac{2n^3}{3}\cdot a_{20}.
\]

Further, computing $p_4(n)$, we get
\[
4p_4(n)    =    p_2(n) (n-2)_3\cdot a_{31} +    p_1(n)(n-1)_3 \cdot a_{30}
  + p_2(n)(n-2)_2\cdot a_{20} +           p_1(n-3)\cdot p_3(n)
\]
which implies that
\[
p_4(n) \sim   \frac{n^9}{4}\cdot a_{31} +    \frac{n^6}{3}\cdot a_{30} + \frac{n^8}{4}\cdot a_{20} + \frac{1}{4!}n^{12}.
\]

Finally, we obtain
\[
b_3(n)   =  -\Delta (p_4(n)) - b_0(n)p_3(n) - b_1(n)p_2(n) - b_2(n) p_1(n)
\]
by plugging the expressions  for $p_j(n), b_j(n)$    from    the  above formulas. This gives
\[
b_3(n)  \sim  \frac{9n^7}{2}\cdot a_{31}+\frac{9n^4}{2}\cdot a_{30}+{3n^6}\cdot a_{20}. 
\]
As the consequence of the latter expansion, we see that if we assume that $b_3(n) \equiv 0$, then, in particular,   we get that $a_{31} = a_{30} = a_{20} =0$. \qed


\subsection{Case $\deg a_3(x) \leq1 \Leftrightarrow a_{32} = 0$}
\label{sec:beta_is_0}

In this subsection our goal  is to prove the following statement.

\bpr{le:ad3beta_is_0}\label{th:imp}
Solution of Problem 2 for $\deg a_3(x) \leq1$ is possible, if and only if  $a_{31} = a_{21} = 0$  and  $a_{30}\neq 0$.
In this case we obtain the Appell polynomials as eigenpolynomials of $L$.
\epr

\proof

We use the same approach as in Subsection~\ref{sub:3.1}.  The operator $L$ is of the form
\[
L = ( a_{31} x +a_{30})\partial_x^3 +(a_{21}x +a_{20})\partial^2_x + (x +a_{10})\partial_x.
\]
By  translation of $x$ we can make $a_{10}= 0$ and work with
\[
L = ( a_{31} x +a_{30})\partial_x^3 +(a_{21}x +a_{20})\partial^2_x + x\partial_x.
\]
(Again such change of variables  only results in somewhat shorter formulas, but otherwise is not essential.)

Expanding 
\[
P_n(x)  = x^n + p_1(n)x^{n-1} + \ldots + p_k(n)x^{n-k} + \ldots,
\]
let us apply $L$ to the polynomial  $P_n(x)$. Straightforward calculation gives for $L(P_n)$ the expression
\begin{eqnarray*}
	&L(P_n) &=( a_{31} x  +a_{30})  ((n)_3\cdot x^{n-3}    +   p_1(n)(n-1)_3\cdot x^{n-4}+ \ldots +p_k(n)(n-k)_3\cdot x^{n-k-3}+\ldots)\\
	&+&(a_{21}x +a_{20})((n)_2\cdot x^{n-2}    +   p_1(n)(n-1)_2\cdot x^{n-3} +\ldots +p_k(n)(n-k)_2\cdot x^{n-k- 2}+\ldots)\\
	&+& x(nx^{n-1}    +   p_1(n)(n-1)x^{n-2}+ \ldots +p_k(n)(n-k)x^{n-k-1}+\ldots)\\
	&=& n(x^n +p_1(n)x^{n-1}  +\ldots + p_k(n)x^{n-k}+ \ldots).
\end{eqnarray*}

Comparing the coefficients of   $x^{n-1}$ in both sides,  we find
\[
p_1(n)  =  (n)_2\cdot a_{21}.
\]
From the expansion of $\La$ we conclude that
\beq \label{b0}
b_0(n)  = -2n\cdot a_{21}.
\eeq

From the coefficients of $x^{n-2}$  we get
\[
p_2(n)  = \frac{1}{2}  \left( (n)_3 \cdot a_{31} +     (n)_2\cdot a_{20}  +p_1(n)p_1(n-1)\right).
\]

Evaluating $b_1(n),b_2(n),b_3(n)$ from \eqref{b-j} we arrive at
\begin{equation}\label{b1}
b_1(n)  = \frac{1}{2} n \left((n-1) \left(2 a_{21}^2-3 a_{31}\right)-2 a_{20}\right), 
\end{equation}

\begin{equation}\label{b2}
b_2(n)  = \frac{1}{3} n (n-1) \left(2 (n-2) a_{21} a_{31}-3 a_{30}\right), 
\end{equation}

and 

\begin{equation}\label{b3}
b_3(n)  = \frac{1}{8} n \left(n^2-3 n+2\right) \left(2 (n-3) a_{31} a_{21}^2+a_{31}
   \left(3 (n-3) a_{31}+4 a_{20}\right)-8 a_{30} a_{21}\right).
\end{equation}

The second factor in the right-hand side of \eqref{b3} equals $a_{31}(n-3)(2 a_{21}^2+3a_{31})+(4a_{31}a_{20}-8 a_{30} a_{21})$.
For $b_3(n)\equiv 0$  as a polynomial in $n$ one has the following two options  
$$\begin{cases} a_{31}=0\\4a_{31}a_{20}-8 a_{30} a_{21}=0\end{cases} \quad \text{or} \quad \begin{cases} 2 a_{21}^2+3a_{31}=0\\4a_{31}a_{20}-8 a_{30} a_{21}=0\end{cases}.$$

To refute the second option we have to involve the expressions for $b_4(n)$ which is as follows:

\begin{equation}\label{b4}
b_4(n)=\frac{1}{15} a_{21} (n-3) (n-2) (n-1) n \left(2 a_{31} a_{21}^2 (n-4)+2 a_{31}
   \left(3 a_{31} (n-4)+5 a_{20}\right)-15 a_{30} a_{21}\right).
   \end{equation}
   
   One can check that the equations of the second option lead to $b_4(n)\not \equiv 0$ which is a contradiction. Finally, the first option splits into two possibilities: $a_{31}=a_{21}=0$ and $a_{31}=a_{30}=0$.


\medskip
The case $a_{31} =a_{21}=0$ leads to  $b_3(n)\equiv 0$ as claimed above. (The case $a_{31}=a_{30}=0$ is forbidden since then $L$ becomes a second order differential operator.) 
\qed

 \section{Case when $\mathcal L$ is given by a quadratic polynomial}\label{sec:quadr}

 This case corresponds to $\deg a_3(x)<3$, $\deg a_2(x)=2$, and $\deg a_1(x)\le 1$. We want to show that in this situation there are no differential operators whose eigenpolynomials satisfy a finite recurrence relation.

\bpr{quad}
 The case when $\la_n$ is quadratic is impossible, i.e., no linear differential operator $L$ of order $3$ satisfying this condition can solve Problem~\ref{prob:genBK}.

 \epr
 \proof

 Our arguments are partially computer-aided due to the complexity of calculations. (The corresponding Mathematica code and its results can be requested from the third author). 
 The scheme of what we are doing is presented below.

 We are going to compute the coefficients of the 4-term recurrence in terms of the coefficients of the operator $L$.
 In the case under consideration,  we can write the  operator $L$ in the form

 \[
 L = (a_{32} x^2 + a_{31} x +a_{30})\partial^3  +(x^2 +a_{21}x +a_{20})\partial^2 + (a_{11} x+a_{10})\partial.
 \]

We can also assume that $a_{21}  =0$, which can be achieved by a translation of $x$. (By a slight abuse of notation we use the same letters for the coefficients of $L$.) From the above form of $L$ we get that $$\la_n=n(n-1)+\nu n.$$

 As above, introduce 
 \[
 P_n(x)  = x^n +p_1(n)x^{n-1}+ \ldots + p_k(n)x^{n-k}+ \ldots .
 \]

We are going to compute the coefficients $p_k(n)$ of  $P_n(x)$  by  considering  $L(P_n(x))$   and  using the formulas \eqref{a-j}   and \eqref{b-j}.   $L(P_n(x))$ satisfies the relation
\begin{eqnarray*}
	&L(P_n)= &(a_{32}  x^2 + a_{31} x +a_{30})  ((n)_3x^{n-3}    +   p_1(n)(n-1)_3x^{n-4} \ldots +p_k(n)(n-k)_3x^{n-k-3}+\ldots)\\
	&+&(x^2 +a_{20})((n)_2x^{n-2}    +   p_1(n)(n-1)_2x^{n-3} \ldots +p_k(n)(n-k)_2x^{n-k- 2}+\ldots)\\
	&+& (a_{11} x+a_{10})(nx^{n-1}    +   p_1(n)(n-1)x^{n-2} \ldots +p_k(n)(n-k)x^{n-k-1}+\ldots)\\
	&=&  (n(n-1) +\nu n)(x^n +p_1(n)x^{n-1}  +\ldots + p_k(n)x^{n-k} +\ldots).
\end{eqnarray*}

We now equalize the coefficients at the same powers of $x$ in the right-hand  and the left-hand sides of the latter equation. For our purposes it would be enough to find expressions for $p_1(n), p_2(n), p_3(n)$, and $p_4(n)$ only. Knowing $p_1(n), p_2(n), p_3(n), p_4(n)$ and using the above formulas, we can express $b_0(n), b_1(n), b_2(n)$ and $b_3(n)$.  As before, to obtain a recurrence relation of order at most $4$ for the eigenpolynomials, we need to ensure that $b_3(n)\equiv 0$.

We will normalize the above expression for $L$ using the action of the affine group on $x$. First, let us consider the case when  $a_{32}\neq 0$.  By rescaling we can obtain $a_{32}  = 1$.   Then the leading coefficient of the polynomial of degree $9$ in the numerator of  $b_3(n)$ equals $8$ which implies that $b_3(n)$ can not vanish identically, see \eqref{b3}. 



If $a_{32} =0$, but  $a_{31} \neq 0$  by rescaling we can assume that $a_{31} = 1$. In this case the leading coefficient of the polynomial of degree $7$ in the numerator of $b_3(n)$ equals $24$ which again implies that $b_3(n)$ can not vanish identically. 

Next assume that $a_{32} = a_{31} = 0 $.   Let us rescale $x$  to obtain   $a_{30} =1$.  

 We still need to find if there exist  values of $  a_{20},$ $ a_{11}, a_{10}$ for which $m_9=m_8=m_7=m_6=0$, where $m_i$ is the coefficient of $n^i$ in the expression     \eqref{a20} for $b_3(n)$.     
 
 First, we get that $m_9=8(96a_{20}^2+288a_{10})$. 
 Assuming that $m_9=0$ we obtain $a_{10}=-\frac{a_{20}^2}{3}$. Inserting the latter expression for $a_{10}$ in $b_3(n)$, and using $m_8=0$, we get  that either $a_{11}=2$ or $a_{20}=0$. In the former case, setting $a_{10}=-\frac{a_{20}^2}{3}$ and $a_{11}=2$ in the expressions for $m_7$ and $m_6$, we get
  $$\begin{cases}
 m_6=-\frac{64}{3}a_{20}^2(27+4a_{20}^3);\\ 
 m_7=\frac{64}{3}a_{20}^2(9+2a_{20}^3).
 \end{cases}
 $$
  Thus, if $a_{20}\neq 0$, then the equations for 
 $m_6=0$ and $m_7=0$ are obviously
 incompatible, i.e., have no common solutions.

Now let us consider the case when $ a_{32} = a_{31} =a_{21} = a_{20}=0$.  We will show that $b_3(n) \equiv 0$, which however is not enough to claim that all $b_j (n) \equiv0$ for $j > 2$. For this reason we are going to compute them up to $j=5$.     We know that in this case also $a_{10} =0$. Again applying the operator $L$ to the polynomials $P_n(x)$,  we obtain

\begin{eqnarray*}
	&L(P_n)= &  (n)_3\cdot x^{n-3}    +   p_1(n)(n-1)_3\cdot x^{n-4} \ldots +p_k(n)(n-k)_3\cdot x^{n-k-3}+\ldots\\
	&+&x^2((n)_2\cdot x^{n-2}    +   p_1(n)(n-1)_2\cdot x^{n-3} \ldots +p_k(n)(n-k)_2\cdot x^{n-k- 2}+\ldots)\\
	&+& a_{11} x(nx^{n-1}    +   p_1(n)(n-1)x^{n-2} \ldots +p_k(n)(n-k)x^{n-k-1}+\ldots)\\
	&=&  (n(n-1) +  a_{11} n)(x^n +p_1(n)x^{n-1}  +\ldots + p_k(n)x^{n-k} +\ldots).
\end{eqnarray*}

Then, for $k>3$,  we find the following formulas for $p_j(n)$:
\[
\begin{cases}
p_1(n)  (2 -2n -a_{11})& = 0;\\
p_2(n)(6-4n-2a_{11})   &=   0; \\
p_3(n)(12-6n-3a_{11})   &=  - (n)_3; \\
 \vdots\quad \quad \quad\vdots \quad \quad \quad\vdots \quad\quad \quad\vdots  &  \quad\quad \vdots  \\ 
p_k(n)(k(k+1)-2kn-ka_{11})   &=     - (n-k+3)_3\cdot p_{k-3}(n). \\
\end{cases}
\]
By induction, we see that $p_{3j}(n) \not{\!\!\equiv} 0$\; while  $p_i (n)\equiv 0$ for $i\neq 3j$.   In particular, 
\[
p_6(n)(42 -12n - 6a_{11})   = - (n-3)_3 \cdot p_{3}(n) 
\]
i.e.,
$$
p_6(n) = \frac{(n)_5 }{18(2n+a_{11}  -4)   (2n+a_{11} - 7) }.
$$

 From this we easily compute  several first $b_j(n)$'s getting
\[
\begin{cases}
b_0(n)& = 0;  \\
b_1(n) & = 0;  \\
b_2(n)  &= - \Delta p_3(n) \neq 0; \\
b_3(n)  &= - \Delta p_4(n) -b_0(n)  p_3(n)  - b_1(n) p_2(n)  -b_2(n) p_1 (n) =0; \\
b_4(n)  &= - \Delta p_5(n) -b_0 (n) p_4(n)  - b_1(n) p_3(n)  -b_2(n) p_2(n)   -b_3(n) p_1(n)   =0; \\
b_5(n)  &= - \Delta p_6(n) -b_0(n)  p_5(n)  - b_1(n) p_4(n)  -b_2(n) p_3(n)   -b_3(n) p_2(n)  -b_4(n) p_1(n). \\
\end{cases}
\]
In the expression for $b_5(n) $ there are two non-vanishing   terms:  $\Delta p_6(n)$  and $b_2(n) p_3(n) $.      Both terms are of order $n^3$. We only need to show that  their sum is not identically zero. Straight-forward calculations give 
\[
b_2(n) p_3  (n) =  \frac{1}{9} \frac{(n)_3 \cdot (n)_2\cdot  (4n - a_{11})}{(2n +a_{11}- 2)(2n +a_{11}-4)^2}
\]
and
\[
\Delta p_6(n) = \frac{(n)_4\cdot R(n)}{(2n +a_{11}- 2)(2n +a_{11}-4)(2n +a_{11}- 5)(2n +a_{11}- 7)       }, 
\]
 where $R(n)$ is a polynomial of degree at most $2$, whose explicit expression is irrelevant for our purposes.
We see that the sum of $b_2(n) p_3(n) $  and $\Delta p_6(n)$ cannot vanish since they contain different factors.

In fact, the explicit expression for $b_5(n) $ can be obtained even without  computer. Our symbolic computations give

\begin{equation}\label{b5}
b_5(n)=\frac{n(n-1)(n-2)(n-3)(n-4) \left(3 a _{11}+5 n-16\right)  }{9
	\left(a _{11}+2 n- 2\right) \left(a _{11}+2 n-7\right) \left(a _{11}+2
	n-5\right) \left(a _{11}+2 n-4\right)}.
\end{equation}
\qed


 \section{Case when $\mathcal L$ is given by a cubic polynomial}
 When $\mathcal L=\{\la_n\}$ is given by a cubic polynomial in $n$, by using translation and scaling, we can without loss of generality assume that
 $$\la_n=n(n-1)(n-2)+\nu n(n-1)+\mu n.$$
 As before, one can easily observe that $\nu=a_{22}$ and $\mu=a_{11}$.

 \medskip
 Using this ansatz, we first formulate some important preliminary statements about the coefficients of $\La$ and $L$.

 \ble{T3}
 The coefficient $\al$ in \eqref{ad3-1} equals 6.
 \ele
 \proof

 To prove this, we compute the term of the highest degree in $T$ in the two expressions for $\ad^3_{\La}\mathcal L$ which we presented earlier. On one hand,  from \eqref{ad3-1} we obtain that $(\ad^3_{\La}\mathcal L)_n = \al T^3+ \dots$.
 Let us compute the term containing    $T^3$ in $(\ad^3_{\La}\mathcal L)_n$.      Simple computation shows that

\[\begin{cases}
\begin{split}
(\ad^1_{\La} \mathcal L)_n&=  (\Delta (\mathcal L))_n \cdot T +\ldots ; \\
 (\ad^2_{\La} \mathcal L)_n&=  (\Delta^2(\mathcal L))_n\cdot T^2 +\ldots ;\\
 (\ad^3_{\La} \mathcal L)_n&=  (\Delta^3(\mathcal L))_n\cdot T^3 +\ldots .
\end{split}
\end{cases}
 \]
 Since $\Delta^3 (\mathcal L) \equiv 6,$ the result follows.
 \qed

 \ble{T-6}\label{lm:T-6}
 The coefficient   $b_2(n)$ in the expansion \eqref{Lambda} of $\La$  vanishes identically.

 \ele

 \proof

 Suppose that $b_2(n)\not \equiv 0$. Then again we can use the above approach and   compute  the term in  $(\ad^3_{\La}\mathcal L)_n$ containing $T^{-6}$ in two different ways. On one hand, if taken from     $6\La^3$,   this term  is given by
 \[
 \al\cdot b_2(n)b_2(n-2)b_2(n-4)T^{-6} = 6 b_2(n)b_2(n-2)b_2(n-4)T^{-6}.
 \]

 On the other hand,  the computation of  the same term   from      $(\ad^3_{\La}\mathcal L)_n$ shows that it coincides with
 \[
 b_2(n)b_2(n-2)b_2(n-4)(- \la_{n-6}  +3\la_{n-4}-3\la_{n-2} +\la_n )T^{-6}.
 \]
 However,  $- \la_{n-6}  +3\la_{n-4}  -3\la_{n-2} +\la_n = 48$ which is only possible  if $b_2(n) \equiv 0$. Indeed, since $b_2(n)$ is a polynomial in the variable $n$,  the equation $$ b_2(n)b_2(n-2)b_2(n-4)=0$$ can not have infinitely many solutions unless $b_2(n) \equiv 0$. 
  \qed

 \ble{b1}\label{lm:T-1}
 Under our current assumptions, the coefficient   $b_1(n)$ in the expansion \eqref{Lambda} of $\La$ vanishes identically.
 \ele

 \proof
 The argument is similar to the above computation  of $b_2(n)$. Assume that $b_1(n)\not \equiv 0$. Notice that by Lemma~\ref{lm:T-6},
  \[
 (\La)_n = T + b_0(n)T^0+ b_1(n)T^{-1}.
 \]
 The  coefficient of $T^{-3}$ in the expression for $6\La^3$ is given by
$ 6 b_1(n)b_1(n-1)b_1(n-2). $

From the expression for     $(\ad^3_{\La}\mathcal L)_n$,   we find that the same term coincides with
  \[
 b_1(n)b_1(n-1)b_1(n-2)(\la_{n-3} -3\la_{n-2} +3\la_{n-1} -\la_n) =     -6b_1(n)b_1(n-1)b_1(n-2).
 \]
Again we get a contradiction.
 \qed

 \medskip

 \ble{new-ad}\label{lm:T-0}
 The coefficient  $b_0(n)$ in the expansion \eqref{Lambda} of $\La$ is an arbitrary constant.
  \ele

 \proof By Lemmas~\ref{lm:T-6} and~\ref{lm:T-1},   $\La = T +b_0(n)$ which makes the computation of both $\La^3$ and $\ad_{\La}^3 \mathcal L$ very easy. Moreover, we only need  the coefficient at $T^0=Id$.    We have
 \[
 (\La^j)_n = \ldots +b_0^j(n)T^0,  \; j=0,1,\dots.
 \]
For $(\ad_{\La}^j\mathcal L)_n,\; j=0,1,\dots,$ it is obvious that it does not contain $T^0$.
 Comparing the coefficients at $T^0$, we obtain
 \[
 \al b_0^3(n)  + \be b_0^2(n)  + \ga b_0(n)  + \de = 0.
 \]
 However this is only possible if $b_0(n) $ is a constant.
 
\qed

Denoting this above constant by $p$ we get the following claim. 
\medskip
\bpr{cubic}
The only linear differential operators of order $3$ solving Problem~\ref{prob:genBK} which have polynomial eigenfunctions with $\la_n = n(n-1)(n-2) +\nu n(n-1) + \mu n $ are of the form
\[
L = 6(x-p)^3\partial_x^3 + \nu (x-p)^2\partial_x^2 + \mu (x-p) \partial.
\]
Hence any such operator $L$ is reducible.
\epr

\proof  By the last lemma we have

\[
xP_n(x)  = P_{n+1}(x) + pP_n(x).
\]
Hence $P_{n+1}    =  (x-p)^n$. The corresponding differential operator of order 3 is as claimed.

\qed

\section{Final Remarks}

\noindent

\noindent
{\bf I.}  To the best
of our knowledge, Conjecture~\ref{conj:genBK}
 containing a complete conjectural description of the set of all solutions to the algebraic version of the classical Bochner-Krall Problem~\ref{CBK-problem} and its generalisation Problem~\ref{prob:genBK} has never previously appeared in the literature.  
  However the way our conjectures are formulated, it is difficult to verify them for operators of somewhat high order unless one finds an alternative description. On the other hand, the  special case of operators of order $4$ seems to be doable by using the methods of the present paper. 

\medskip
\noindent
{\bf II.}   One additional restriction of the validity of the results obtained in the present paper comes from the fact that we are using the assumptions  of  Conjecture~\ref{conj:order}. To actually claim that we have solved Problem~\ref{prob:genBK}  for all operators of order $3$, we need to settle this conjecture at least in this case. We hope to return to this project in a future publication. 

\medskip
\noindent
{\bf III.} Our proof of the main Theorem~\ref{th:main} is based on consideration of a large number of special subcases and is partially computer-aided. Such method is not very effective in order to be able to approach our general conjectures  and a more conceptual understanding of our proof is needed. 

\medskip
\noindent
{\bf IV.} In connection with Proposition~\ref{prop:translate} and Wendroff's theorem we want to ask under which conditions on the sequences $\{u_n\}$ and $\{v_n\}$ the polynomial sequence $\{P_n(x)\}$ consists of real-rooted polynomials.

\medskip
\section {Appendix I. Proving Lemma~\ref{lm2.1} by explicit computations}
Here us present an alternative approach to evaluation of \eqref{eq:j4}. To this end, we
find explicit form of $\left(\Lambda\right)_n$, i.e. we express $b_j(n)$, $j=0,1,2$
in terms of the coefficients $a_{ij}$ of the differential operator $L(x,\partial)$. The operator $\left(\Lambda\right)_n$ is
represented as a four-diagonal infinite matrix and $\mathcal{L}$ as an infinite matrix
with nonzero main diagonal $\{\mu_n\}$.

\smallskip
More explicitly, one has
$$
\mu_n=n a_{11}+(n-1) n a_{22}+(n-2) (n-1) n a_{33}\quad 
\text{and} \quad 
b_0(n)=\rho/\sigma,
$$
where
\begin{equation*}
\begin{aligned}
\rho=&-3 (n-2) n (n-1)^2 a_{32} a_{33}-2 n (n-1) a_{21} a_{22}-3 n (n-1)
   a_{11} a_{32}- \\
   &2 n (2 n-1) (n-1) a_{22} a_{32}+3 (n+2) (n-1) a_{10}
   a_{33}+6 n (n-1) a_{21} a_{33}- \\
   &2 n a_{11} a_{21}-a_{10} a_{11}+2 a_{10} a_{22};
\end{aligned}
\end{equation*}
\begin{equation*}
\sigma=\left(a_{11}+2 (n-1) a_{22}+3 (n-2) (n-1) a_{33}\right) \left(a_{11}+2 n
   a_{22}+3 (n-1) n a_{33}\right).
\end{equation*}
The other two coefficients are given by cumbersome formulae which can be sent to an interested reader upon request. (In particular, the expressions for the coefficients $b_1$ consists of 53 
and for $b_2$ of 113  terms respectively).

The fourth power of commutator \eqref{eq:j4} is then a matrix with at most 13 non-vanishing diagonals (eight 
below  and four above the main diagonal). However, several of them actually vanish once the
explicit expressions for the entries are substituted. These vanishing diagonals are the fourth, the third, the second and the first above the main diagonal as well as the eighth below the main diagonal. More explicitly one has the following.   

\medskip
\noindent
The fourth diagonal above the main one is given by:
$$\mu_i-4\mu_{i+1}+6\mu_{i+2}-4\mu_{i+3}+\mu_{i+4}\equiv 0, \qquad \text{for }  i=0,\ldots .$$

\noindent
The third diagonal above the main one is given by:
\begin{equation*}
\begin{aligned}
&(-3\mu_i+6\mu_{i+1}-4\mu_{i+2}+\mu_{i+3})b_0(i)+(\mu_i+2\mu_{i+1}-4\mu_{i+2}+\mu_{i+3})b_0(i+1)+ \\
&(\mu_i-4\mu_{i+1}+2\mu_{i+2}+\mu_{i+3})b_0(i+2)+ \\
&(\mu_i-4\mu_{i+1}+6\mu_{i+2}-3\mu_{i+3})b_0(i+3)\equiv 0, \qquad \text{for }  i=0,\ldots .
\end{aligned}
\end{equation*}

\noindent
The second diagonal above the main one is given by:
\begin{equation*}
\begin{aligned}
&(-4\mu_i+7\mu_{i+1}-4\mu_{i+2}+\mu_{i+3})b_1(i+1)+(3\mu_{i+1}-4\mu_{i+2}+\mu_{i+3})b_0(i+1)^2+ \\
&(7\mu_{i+1}-8\mu_{i+2}+\mu_{i+3})b_1(i+2)+(\mu_{i+1}-2\mu_{i+2}+\mu_{i+3})b_0(i+2)^2+ \\
&(\mu_{i+1}-8\mu_{i+2}+7\mu_{i+3})b_1(i+3)+(\mu_{i+1}+2\mu_{i+2}-3\mu_{i+3})b_0(i+2)b_0(i+3)+ \\
&(\mu_{i+1}-4\mu_{i+2}+3\mu_{i+3})b_0(i+3)^2+b_0(i+1)((-3\mu_{i+1}+2\mu_{i+2}+\mu_{i+3})b_0(i+2)+ \\
&(-3\mu_{i+1}+6\mu_{i+2}-3\mu_{i+3})b_0(i+3))+ \\
&(\mu_{i+1}-4\mu_{i+2}+7\mu_{i+3}-4\mu_{i+4})b_1(i+4)\equiv 0, \qquad  \text{for }  i=0,\ldots .
\end{aligned}
\end{equation*}

\noindent
The first  diagonal above the main one is given by:
\begin{equation*}
\begin{aligned}
&-\mu _{i+2} b_0(i+2)^3+\mu _{i+3} b_0(i+2)^3+3 \mu _{i+2} b_0(i+3)b_0(i+2)^2- \\
&3 \mu _{i+3} b_0(i+3) b_0(i+2)^2-3 \mu _{i+2} b_0(i+3)^2b_0(i+2)+3 \mu _{i+3} b_0(i+3)^2 b_0(i+2)- \\
&4 \mu _{i+2} b_1(i+2) b_0(i+2)
 +2 \mu _{i+3} b_1(i+2) b_0(i+2)-4 \mu _{i+2} b_1(i+3) b_0(i+2)+ \\
&4 \mu _{i+3} b_1(i+3) b_0(i+2)-3 \mu _{i+2} b_1(i+4) b_0(i+2)+7 \mu _{i+3} b_1(i+4)
 b_0(i+2)- \\
&4 \mu _{i+4} b_1(i+4) b_0(i+2)+\mu _{i+2} b_0(i+3)^3-\mu _{i+3}
 b_0(i+3)^3-4 \mu _i b_2(i+2)- \\
&3 \mu_{i+2} b_2(i+2)+\mu _{i+3} b_2(i+2)-3
 \mu _{i+2} b_0(i+1) b_1(i+2)+ \\
&\mu _{i+3} b_0(i+1) b_1(i+2)-3 \mu _{i+2}
 b_2(i+3)-3 \mu _{i+3} b_2(i+3)+ \\
&7 \mu _{i+2} b_1(i+2) b_0(i+3)-3 \mu _{i+3}
 b_1(i+2) b_0(i+3)+4 \mu _{i+2} b_1(i+3) b_0(i+3)- \\
&4 \mu _{i+3} b_1(i+3)
 b_0(i+3)+2 \mu _{i+1} \left(3 b_2(i+2)+b_0(i+1) b_1(i+2)+\right. \\
&\left. b_1(i+2) b_0(i+2)
 +3 b_2(i+3)-2 b_1(i+2) b_0(i+3)\right)-3 \mu _{i+2} b_2(i+4)- \\
&3 \mu _{i+3}
 b_2(i+4)+6 \mu _{i+4} b_2(i+4)+2 \mu _{i+2} b_0(i+3) b_1(i+4)- \\
&4 \mu _{i+3}
 b_0(i+3) b_1(i+4)+2 \mu _{i+4} b_0(i+3) b_1(i+4)+\mu _{i+2} b_1(i+4)
 b_0(i+4)- \\
&3 \mu _{i+3} b_1(i+4) b_0(i+4)+2 \mu _{i+4} b_1(i+4) b_0(i+4)+
 \mu _{i+2} b_2(i+5)- \\
&3 \mu _{i+3} b_2(i+5)+6 \mu _{i+4} b_2(i+5)-4 \mu _{i+5}
 b_2(i+5)\equiv 0, \qquad \text{for } i=0,\ldots .
\end{aligned}
\end{equation*} 

\noindent
Finally, the eighth diagonal below the main one is given by:
\begin{equation*}
\begin{aligned}
&\left(\mu _i-4 \mu _{i+2}+6 \mu _{i+4}-4 \mu _{i+6}+\mu _{i+8}\right) b_2(i+2)
b_2(i+4) b_2(i+6) b_2(i+8)\equiv 0, \\
&\qquad \text{for }  i=0,\ldots .
\end{aligned}
\end{equation*}

The rest of diagonals are non-vanishing and their explicit expressions are rather complicated
functions of the coefficients $a_{ij}$.

\smallskip
However, if we use coefficients of polynomials that solve both the eigenvalue
problem \eqref{eq:specprob} and satisfy the  four-term recurrence relation \eqref{eq:MOP}  then the whole fourth commutator operator given by \eqref{eq:j4}
 becomes the zero matrix; this fact can be verified by plugging the coefficients from the recurrences
listed in Theorem~\ref{th:main} into $\left(\Lambda\right)_n$ and $\mathcal{L}$.

\medskip
\section {Appendix II. Computer-aided formulas relevant for  \S~\ref{sec:quadr}}
$$L_3 = (a_{32}x^2 +a_{31}x+a_{30}) \partial^3 + (x^2 +a_{21}x+a_{20}) \partial^2 +(a_{11}x+a_{10})\partial .$$

\begin{equation}\label{recgen}
xP_n(x) = P_{n+1}(x)+ \sum_{j=0}^n b_j(n)P_{n-j}(x).
\end{equation}

Putting $a_{32}=1$ we get
\begin{equation}  \label{b3}
b_3(n)=\frac{\left(8 n^9+\mathcal{O}\left(n^{8}\right)\right)}{8 \left(a_{11}+2 n-6\right) \left(a_{11}+2 n-5\right)
   \left(a_{11}+2 n-4\right) \left(a_{11}+2 n-3\right) \left(a_{11}+2
   n-2\right)}.
\end{equation}

With $a_{32}=0$ and $a_{31}=1$ we get
\begin{equation}  \label{31}
b_3(n)=\frac{\left(24 n^7+\mathcal{O}\left(n^{6}\right)\right)}{8 \left(a_{11}+2 n-6\right) \left(a_{11}+2 n-5\right)
   \left(a_{11}+2 n-4\right) \left(a_{11}+2 n-3\right) \left(a_{11}+2
   n-2\right)}.
 \end{equation}
 
 Finally,  with $a_{32}=0$, $a_{31}=0$, and $a_{30}=1$  (recall that   $a_{21}= 0$)
 we arrive at
\begin{equation}
\begin{aligned}
b_3(n)=\frac{n \left(n^2-3 n+2\right) a_{10} \left((7 n-15) a_{11}+2 a_{11}^2+6 n^2-26
   n+28\right)}{\left(a_{11}+2 n-6\right) \left(a_{11}+2 n-5\right)
   \left(a_{11}+2 n-4\right) \left(a_{11}+2 n-3\right) \left(a_{11}+2
   n-2\right)}.
\end{aligned}     \label{a20}
\end{equation}

\end{document}